\documentclass[journal]{IEEEtran}

\ifCLASSINFOpdf

\else

\fi

\usepackage{amssymb}
\usepackage{amssymb}
\usepackage{amsmath}
\usepackage{mathrsfs}
\usepackage{amsmath, amssymb}
\usepackage{amsmath}
\usepackage{graphicx}
\usepackage{amsmath, amssymb, amsthm}
\usepackage{leftidx}
\usepackage{extarrows}
\usepackage{stfloats}
\usepackage{floatrow}
\newtheorem{prop}{Proposition}

\newtheorem{theorem}{Theorem}

\newtheorem{rmk}{Remark}
\newtheorem{lem}{Lemma}
\newtheorem{problem}{Problem}

\newtheorem{defnn}{Definition}

\usepackage{amssymb}
\usepackage{picinpar}
\usepackage{epstopdf}
\usepackage[nosort]{cite}
\usepackage{algorithm}
\usepackage{algpseudocode}
\usepackage{caption}
\usepackage{subfigure}
%\usepackage{CJK}

%\usepackage[linesnumbered,boxed]{algorithm2e}
%\usepackage[linesnumbered,boxed]{algorithm2e}

% correct bad hyphenation here
\hyphenation{}

\begin{document}

\title{Categorization Problem on Controllability of\\ Boolean Control Networks}

\author{Qunxi~Zhu,~Zuguang Gao,~Yang~Liu,~\IEEEmembership{Member IEEE},~Weihua Gui
\thanks{Qunxi Zhu's research was sponsored by the China Scholarship Council. Zuguang Gao's research was supported in part by Chicago Booth Ph.D. fellowship and Oscar G. Mayer Ph.D. fellowship. This work was also supported in part by the National Natural Science Foundation of China under grant 61321003, and the Natural Science Foundation of Zhejiang Province of China under grant D19A010003. ({\textit{Corresponding author: Yang Liu.}})
}

\thanks{Q. Zhu is  with the School of Mathematical Sciences, Fudan University, Shanghai 200433, China. (e-mail:qxzhu16@fudan.edu.cn)}
\thanks{Z. Gao is with Booth School of Business, University of Chicago, Chicago, IL 60637, United States. (e-mail:zuguang.gao@chicagobooth.edu)}
\thanks{Y. Liu is with the School of Mathematics and Computer Science, Zhejiang Normal University, Jinhua 321004, China. (e-mail:liuyang@zjnu.edu.cn)
}
\thanks{W. Gui is with the School of Information Science and Engineering, Central South University, Changsha 410083, China. (e-mail:gwh@csu.edu.cn)}
}

%\author{Qunxi~Zhu,~Yang~Liu,~\IEEEmembership{Member IEEE},~Jianquan~Lu,~\IEEEmembership{Member IEEE},~Jinde~Cao,~\IEEEmembership{Fellow IEEE}
%\thanks{This work was supported by the National Natural Science Foundation of China under Grant No.  11671361, {61573096} and 61573102, the Natural Science Foundation of Jiangsu Province of China under Grant BK20170019, and China Postdoctoral Science Foundation under Grant No. 2016T90406,  2015M580378, 2014M560377, and 2015T80483, Jiangsu Province Six Talent Peaks Project under Grant 2015-ZNDW-002, {and the Jiangsu Provincial Key Laboratory of Networked Collective Intelligence under Grant No. BM2017002. ({\textit{Corresponding author: Yang Liu.}})}}
%
%\thanks{Q. Zhu is  with the School of Mathematical Sciences, Fudan University, Shanghai 200433, China. (e-mail:qxzhu16@fudan.edu.cn)}\thanks{Y. Liu is with the College of Mathematics,  Physics and Information
%Engineering,  Zhejiang Normal University, Jinhua 321004, China, and also with the School of Mathematics, Southeast University, Nanjing 210096, China. (e-mail:liuyang@zjnu.edu.cn)
%}
%\thanks{J. Lu and J. Cao   are with the School of Mathematics, Southeast University,
%Nanjing 210096, China. (e-mail:jqluma@seu.edu.cn; jdcao@seu.edu.cn)}
%}

% make the title area
\maketitle

\begin{abstract}
    A Boolean control network (BCN) is a discrete-time dynamical system whose variables take values from a binary set $\{0,1\}$. At each time step, each variable of the BCN updates its value simultaneously according to a Boolean function which takes the state and control of the previous time step as its input. Given an ordered pair of states of a BCN, we define the \emph{set of reachable time steps} as the set of positive integer $k$'s where there exists a control sequence  such that the BCN can be steered from one state to the other in exactly $k$ time steps; and the \emph{set of unreachable time steps} as the set of $k$'s where there does not exist any  control sequences such that the BCN can be steered from one state to the other in exactly $k$ time steps. We consider in this paper the so-called \emph{categorization problem} of a BCN, i.e., we develop a method, via algebraic graph theoretic approach, to determine whether the set of reachable time steps and the set of unreachable time steps, associated with the given pair of states, are finite or infinite. Our results can be applied to classify all ordered pairs of states into four categories, depending on whether the set of reachable (unreachable) time steps is finite or not.
\end{abstract}

% Note that keywords are not normally used for peerreview papers.
\begin{IEEEkeywords}
    Boolean control network;  Categorization; Controllability; Semi-tensor product of matrices; Algebraic graph theory.
\end{IEEEkeywords}

\IEEEpeerreviewmaketitle
\section{Introduction}

The Boolean network (BN) was firstly proposed by Kauffman \cite{aa1} to model  gene regulatory networks (GRNs). BN is a simple yet quite powerful tool for analizing GRNs, compared with other tools such as those involving ordinary differential equations, which often have numerous unknown parameters and can be hardly solved for large-scale systems~\cite{smolen2000mathematical}.
In addition, the BNs facilitate to study the possible steady-state behaviors systematically.   For example, Albert \textit{et al.} proposed a simplified BN of the segment polarity gene network of \textit{Drosophila} melanogaster \cite{albert2003topology}. Such a BN can provide an essential qualitative description for the expression of genes.  BNs with external control inputs are called Boolean control networks (BCNs). A typical example is the cell cycle control network of fission yeast~\cite{davidich2008boolean}.

In the past decade, Cheng and his colleagues \cite{Cheng2010} have proposed a seminal technique, called semi-tensor product (STP) of matrices, for analyzing BNs and BCNs. Some applications of STP include the analysis of controllability~\cite{Cheng2009controllability,aa18,Laschov2012,zhu2019further}, observability~\cite{Cheng2009controllability,Fornasini2013,cheng2016note,Zhu2018OB,guo2018redefined}, stability and stabilization~\cite{chen2018monostability,guo2018stability,meng2019stability,li2017lyapunov,li2017stabilization},  optimal control~\cite{aa9,aa10,wu2019optimal} %, disturbance decoupling problem \cite{Liu2017Pinning, Yu2018Block}
    and so on. Moreover, other kinds of BNs and BCNs, such as the conjunctive Boolean networks (CBNs)~\cite{gao2018controllability,weiss2018minimal,weiss2018polynomial,gao2018stability,chen2017asymptotic}, are recently prevalence. It is no surprise that the research on the BNs and BCNs has become increasingly attractive and challenging. Specifically, the study of controllability has developed rapidly in recent years~\cite{Cheng2009controllability,aa18,Laschov2012,zhu2019further}. One of the most influential results on controllability was provided in~\cite{aa18}, where they defined a so-called controllability matrix, and the controllability of the BCN can be determined by checking the positiveness of the controllability matrix. Additionally, Laschov and Margaliot~\cite{Laschov2012} further studied the $k$~fixed-time controllability by applying the Perron-Frobenius theory. Roughly speaking, an ordered pair of states is $k$~fixed-time controllable if there exists a control sequence that drives the system from the first state to the second state in exactly~$k$ time steps. The results in~\cite{Laschov2012} relates the~$k$ fixed-time controllability with the  positiveness and primitivity of some matrices.
    We will formally define these concepts and introduce the relevant results in section~\ref{section2}.

    In this paper, we propose and answer the following questions: Given a starting state and an ending state, is there infinite number of positive integer~$k$'s such that the pair is~$k$ fixed-time controllable? Is there infinite number of positive integer~$k$'s such that the pair is not~$k$ fixed-time controllable? Equivalently, we define the set of reachable time steps (set of unreachable time steps, respectively) as the collection of positive integer $k$'s such that the given pair of states is~$k$ fixed-time controllable (not~$k$ fixed-time controllable, respectively), and check the finiteness of these two sets. A complete answer to this question is provided as Theorem~\ref{thm:main02}, and some further result is also presented (see Theorem~\ref{lem:Condensationmatrix}).

    The motivation of our study is two-fold. First, we note that a BCN is said to be~$k$ fixed-time controllable if every ordered pair of states of the BCN is~$k$ fixed-time controllable. It was shown in~\cite{Laschov2012} that if a BCN is~$k$ fixed-time controllable, then the BCN is also~$p$ fixed-time controllable for any $p\ge k$ (see Theorem~\ref{las_2} in section~\ref{section2}). However, for a specific \emph{pair of states} which is~$k$ fixed-time controllable, it is not necessarily true that the \emph{pair} is~$p$ fixed-time controllable for any $p\ge k$. A natural question one may ask is that for a given pair of states, does there exist some integer~$k$ such that the pair is~$p$ fixed-time controllable for any $p\ge k$. If the answer is yes, we say that this pair of states falls into the \emph{primitive} category. If the answer is no, we further classify those pairs into three other categories. The detailed formulation is provided in section~\ref{section2}.

    A second motivation of our research comes from potential biological applications. The goal of interest may be to drive a system from one state to another, assuming that the former is undesired and the latter is desired. Additionally, one may encounter the situation that a biological system consists of several identical subsystems with no couplings among them, and each subsystem is modeled by the same BCN. For example, a multi-cellular organism has identical BCNs, each modeling a cell-cycle~\cite{Laschov2012}. We may be interested in finding a control law with respect to each subsystem to drive each subsystem from different initial states to the same desired state at some fixed time. Our results in this paper characterize all possible values of such fixed times efficiently, without checking each positive integer. If such a fixed time exists, all subsystems can be applied with the same control law afterwards, resulting in a complete synchronization of the states of these subsystems in the following dynamical evolutions.

    The remainder of this paper is organized as follows. Section~$\ref{section2}$ introduces some preliminaries on algebraic graph theory and the existing controllability results of BCNs. In section~$\ref{section3}$ we present the categorization problem on controllability of BCNs and establish our main result. An illustrative example is provided in section~$\ref{section4}$. Finally, we conclude the paper in section~\ref{sec5}.

    Before ending this section, we present the following notations that will be used throughout the paper: $\mathbb{Z}^+$ -- the set of the positive integers; $[a,b]$ --  the integer set $\{a,a+1,...,b\}$ with  $a\leq b$; ${\rm Col}_i(A)$ -- the $i$th column of the matrix $A$; $\Delta_k:=\{\delta_k^i~|~i=1, 2,\cdots, k\}$, where $\delta_k^i$ is the $i$th column of the identity matrix $I_k$; $\mathscr{D}:=\{T=1,F=0\}$ -- the logic field; An $m\times n$ matrix $A$ with ${\rm Col}_i(A)\in\Delta_{m}$ for all $i$ -- the logical matrix;  $\mathcal{L}_{m\times n}$ -- the set of all $m\times n$ logical matrices; $A = \delta _m[i_1,i_2,...,i_n]$  -- the simplified expression for $A = [\delta _m^{i_1},\delta _m^{i_2},...,\delta _m^{i_n}] \in \mathcal{L}_{m \times n}$; ${\mathscr{B}}_{n\times n}$ -- the set of $n\times n$  Boolean matrices, i.e., all entries are $0$ or $1$;
    $\mathscr{B}(A)$ --- Boolean form of nonnegative matrix $A$, which is a Boolean matrix with the $ij$th entrie $1$ if $A_{ij}>0$, and the $ij$th entrie $0$ if $A_{ij}=0$.
    $A+_{\mathscr{B}}B=(A_{ij}\vee B_{ij})$ (resp. $A{\times}_{\mathscr{B}}B:=\left(\sum\limits_{k = 1}^{n }{}_{{\mathscr{B}}} (A_{ik}\wedge B_{kj})\right)_{ij}$) -- the Boolean addition (resp. product) of $A, B\in {\mathscr{B}}_{n\times n}$;  $A^{(k)}:=\underbrace {A{ \times _{\mathscr{B}}} \cdots { \times _{\mathscr{B}}}A}_k$; A matrix $A > 0$ means its entries are positive; $|C|$ -- the cardinal number of the set $C$. $\ltimes$ -- semi-tensor product (STP) of matrices.

\section{Problem Formulation and Backgrounds}\label{section2}

    %In this paper, solving Problem \ref{prob},  i.e., obtaining the controllability category matrix $\mathcal{C}$, is mainly based on the properties of the reducible matrix. To this end,  we first present some basic preliminaries about algebraic graph theory \cite{brualdi1991combinatorial}.

\subsection{Problem formulation}
In this subsection, we formally introduce the categorization problem. We first need the following definitions.

A BCN with $n$ state variables can be described as follows:
\begin{equation} \label{eq:1}
        x_i(t+1)=f_i(x_1(t),...,x_n(t); u_1(t),...,u_m(t)),\\
\end{equation}
where $x_i \in\mathscr{D}, i\in[1, n]$ are the state variables, $u_i\in\mathscr{D}^{m}, i\in[1,m]$ are the input variables, and $f_i:\mathscr{D}^{n+m}\rightarrow\mathscr{D}$, $i\in [1,n]$ are the logical functions.
With vector form expression, i.e., we use $\delta_2^1$ to represent state~$1$ and $\delta_2^2$ to represent state~$0$,
one has $x_i, u_i\in \Delta_2$. Then as in~\cite{Cheng2010},~\eqref{eq:1} can be transformed into the algebraic form:
\begin{equation}\label{eq:2}
    x(t+1)=L\ltimes u(t) \ltimes x(t),
\end{equation}
where $x(t)=\ltimes_{i=1}^{n}x_{i}(t) \in \Delta_N$ with $N:=2^n$, $u(t)=\ltimes_{j=1}^{m}u_{j}(t)\in \Delta_M$ with $M:=2^m$, and $L \in \mathcal{L}_{N\times NM}$. Let
\begin{equation}\label{controllabiiltymtraix}
    \mathcal{M}:=\sum\limits_{j=1}^{M}{}_{{\mathscr{B}}}~L\ltimes \delta_M^j~\mbox{and}~
    \mathcal{F}:=\sum\limits_{i=1}^{N}{}_{{\mathscr{B}}}~\mathcal{M}^{(i)},
\end{equation}
where $\mathcal{F}$ is called the \emph{controllability matrix} \cite{aa18}.
We define the controllability of a BCN as follows.
\begin{defnn}[Controllability~\cite{aa18,Laschov2012}]
    The BCN~\eqref{eq:1} is
    \begin{enumerate}
        \item controllable  from  $x_0$ to $x_d$, if there are a $T > 0$ and a sequence of control  $u(0)$,...,$u(T - 1)$, such that driven by these controls the trajectory can go from $x(0) = x_0$ to $x(T) = x_d$;
        \item controllable at $x_0$, if it is controllable from $x_0$ to destination $x_d = x$, $\forall x$;
        \item controllable, if it is controllable at any $x$.
    \end{enumerate}
\end{defnn}
We also define the $k$ fixed-time controllability of a BCN.
\begin{defnn}[$k$ fixed-time controlllability~\cite{Laschov2012}]
    Given a pair of states $(x_0$, $x_d)$, the pair is called $k$ fixed-time controllable if there exists a sequence of control  $u(0)$,...,$u(k-1)$ that steers the BCN~\eqref{eq:1} from $x(0) = x_0$ to $x(k) = x_d$.
    The BCN~\eqref{eq:1} is $k$ fixed-time controllable if all pairs $(x_0,x_d)$ are $k$ fixed-time controllable.
\end{defnn}

For each ordered pair of states ($\delta_N^i, \delta_N^j$), we define two sets $\rho(i,j)$ and $\sigma(i,j)$ as follows: for each positive integer $k$, if there is  a sequence of control   $u(0)$,...,$u(k-1)$ that steers the BCN from $x(0) = \delta_N^i$ to $x(k) = \delta_N^j$, then $k\in\rho(i,j)$; otherwise, $k\in \sigma(i,j)$. It should be clear that $\rho(i,j)\sqcup \sigma(i,j) = \mathbb{Z}^+$. As a reference, we call $\rho(i,j)$ the \emph{set of reachable time steps} and $\sigma(i,j)$ the \emph{set of unreachable time steps}.

%the integer $k \in \rho(i,j)$ (resp.~$\sigma(i,j)$), then there is  a (resp. no any) sequence of control   $u(0)$,...,$u(k-1)$ such that steers the BCN from $x(0) = \delta_N^i$ to $x(k) = \delta_N^j$. Particularly, it holds that $\rho(i,j)\cap \sigma(i,j)=\emptyset$ and $\rho(i,j)\cup \sigma(i,j)=\mathbb{N}$.
With the above definitions, we present the categorization problem as follows.
\begin{problem}\label{prob}
    Consider the BCN~\eqref{eq:1}.  %with the reachable and unreachable time-step sets, i.e.,  $\rho(i,j)$ and $\sigma(i,j)$, respectively.
    The goal is to classify all pairs ($\delta_N^i, \delta_N^j$) into the four categories:
    \begin{enumerate}
        \item {\rm unreachable}: $|\rho(i,j)|=0$;% reachable
        \item {\rm transient}: $0<|\rho(i,j)|<\infty$ and $|\sigma(i,j)|=\infty$; % Transient category  % but $r_{ij}$ has infinite
        \item {\rm primitive}: $|\rho(i,j)|=\infty$ and $|\sigma(i,j)|<\infty$; % Primitive category
        \item {\rm imprimitive}: $|\rho(i,j)|=\infty$ and $|\sigma(i,j)|=\infty$. % Reducible category
    \end{enumerate}
\end{problem}

Equivalently, one wishes to obtain the {\rm controllability categorization matrix}
    %\begin{equation}
        $\mathcal{C}=(\mathcal{C}_{ji}),~ \mathcal{C}_{ji} \in[0, 3]$,
    %\end{equation}
    where $\mathcal{C}_{ji}$ is defined to be $k$ if the pair $(\delta_N^i,~\delta_N^j)$ belongs to the category $(k+1)$.

\subsection{Backgrounds}

We note that, as the number of state pairs in Problem~\ref{prob} is huge, and the BCN~\eqref{eq:1} can have complicated structures, solving Problem~\ref{prob} requires nontrivial techniques. We will develop a method via algebraic graph theoretic approach. Prior to that, we introduce in this subsection some preliminary results on digraphs and matrices, as well as the results on controllability of BCNs.

\subsubsection{Directed graphs}
%We first present some preliminaries about directed graphs (digraphs).

    Let $G = (V, E)$ be a digraph with the set of nodes (vertices) $V$ and the set of directed edges $E \subseteq V\times V$. The \emph{order} of a graph $G$ is the number of nodes in $V$. We denote by $v_i \rightarrow v_j$ a directed edge from $v_i$ to $v_j$ in $G$, and if $i=j$, the edge is called the \textit{self-loop} of the node $i$. The \textit{adjacency matrix} $A  \in {\mathscr{B}}_{n\times n}$ of $G$ is defined as follows: $A_{ij}=1$ (resp. $0$)  if and only if $v_i \rightarrow v_j \in (resp. \notin) ~E$.    %In other words, $A$ completely characterizes  $G$ and
    For simplicity, the digraph (i.e., $G$) of $A$ is denoted by $\mathcal{G}(A)$.

    Assumed that $v_i$ and $v_j$ are two nodes of $G$. A \textit{walk} from $v_i$ to $v_j$, denoted by $w_{ij}$, is a sequence of nodes $v_{i_0}\rightarrow v_{i_1}\rightarrow\cdots \rightarrow v_{i_m}$ in which each $v_{i_j}\rightarrow v_{i_{j+1}}$, for $j = 0,\ldots, m-1$, is an edge. If $v_{i_0}=v_{i_m}$, the walk is called a \emph{closed walk}. A \emph{cycle} is a closed walk with no repetition of nodes other than the starting- and the ending- node. A walk is said to be a {\em path} if all the nodes in the walk are pairwise distinct. Let $p_{ij}$ be a path from $v_i$ to $v_j$. We denote by $P_{ij}$ the set of all paths from $v_i$ to $v_j$. The length of a walk (resp. path, cycle) is the number of edges in that walk (resp. path, cycle).

    Two nodes  $v_i$ and $v_j$ of $G$ are called \textit{strongly connected} if there exists a directed walk from $v_i$ to $v_j$, and a directed walk from $v_j$ to $v_i$. A graph $G$ is strongly connected if any two nodes $v_i$ and $v_j$ are strongly connected. A single node with self-loop is regarded as trivially strongly connected to itself.  Evidently, strong connectivity between nodes is reflexive, symmetric, and transitive, resulting in an equivalence relation on the nodes of $G$ and simultaneously yielding a partition,
    %\begin{equation*}
        $V_1\sqcup V_2\sqcup\cdots\sqcup V_S$,
    %\end{equation*}
    with $\bigcup V_i=V$. Let $E_i$ be the set of edges $v_{i_j}\rightarrow v_{i_k}$ such that $v_{i_j}, v_{i_k}\in V_i$. Then $G_i=(V_i, E_i), i \in [1,S]$ are the \emph{induced subgraphs} of $G$. We also call each induced subgraph a \textit{strongly connected component} (SCC) of $G$. Specifically, a single node without self-loop is an SCC by itself. In this paper, we call such a single node the Type 1 SCC (T1SCC), and all other SCCs the Type 2 SCC  (T2SCC).

    %The induced subgraphs $G_1=(V_1, E_1) , G_2=(V_2, E_2) , . . . , G_d=(V_d, E_d)$ formed by preserving the nodes in an equivalence class and the edges among them, are called the \textit{strongly connected components }(SCCs) of $G$.
    We next present the following definition on condensation digraphs.

        \begin{defnn}[Condensation digraph~\cite{brualdi1991combinatorial}] \label{Condensation}
        Let $G$ be a digraph and $A$ be its adjacency matrix. Assume that $G$ has $S$ SCCs: $G_1,\ldots,G_S$, where $G_i=(V_i,E_i)$. %, $\mathcal{G}(A^{1,1}), \mathcal{G}(A^{2,2}), ... ,\mathcal{G}(A^{S,S})$.
        Let $\mathcal{G}^*(A)$ be the {\rm condensation digraph} of $\mathcal{G}(A)$, and $A^*$ be the adjacency matrix of $\mathcal{G}^*(A)$. $\mathcal{G}^*(A)$ and  $A^*$ are constructed as follows:
        \begin{enumerate}
            \item The set of nodes of $\mathcal{G}^*(A)$ is obtained by identifying each SCC as a node,
            \item If there exists a directed edge in $\mathcal{G}(A)$ from a node in $V_i$ to a node in $V_j$, %with $i<j$ (i.e., $A^{i,j}\neq \textbf{0}$),
            then $A^*_{ij}=1$; otherwise, $A^*_{ij}=0$.
        \end{enumerate}
         The constructed condensation digraph $\mathcal{G}^*(A)$ has no closed directed walks.
    \end{defnn}
    We then define the primitivity of a digraph.
\begin{defnn} [Primitive digraph~\cite{brualdi1991combinatorial}]\label{df:imprimitivity}
       Let $G$ be a strongly connected digraph of order $n$. Let $\eta=\eta(G)$ be the greatest common divisor of the lengths of the cycles of $G$. The digraph $G$ is primitive if $\eta = 1$ and  imprimitive if $\eta>1$. The integer $\eta$ is called the {\rm index of imprimitivity} of $G$. The index of imprimitivity $\eta$ is also referred to as the  {\rm loop number} \cite{gao2018stability}.
    \end{defnn}
    With Definition~\ref{df:imprimitivity}, we have the following result.
    \begin{lem}[\cite{brualdi1991combinatorial}] \label{imprimitivity_property}
        Let $G$ be a strongly connected digraph of order $n$ with index of imprimitivity $\eta$. Then, for each pair of nodes $v_i$ and $v_j$, the lengths of the directed walks from $v_i$ to $v_j$ are congruent modulo $\eta$.
        %\begin{enumerate}
%            \item For each vertex $a$ of $G$, $k$ equals the greatest common divisor of the lengths of the closed directed walks containing $a$.
%            \item For each pair of vertices $a$ and $b$, the lengths of the directed walks from $a$ to $b$ are congruent modulo $k$.
%            \item The set $V$ of vertices of $G$ can be partitioned into $k$ nonempty sets $V_1,V_2,...,V_k$, called the {\rm sets of imprimitivity} of $G$, where, with $V_{k+1} = V_1$, each edge of $D$ issues from $V_i$ and enters $V_{i+1}$ for some $i$ with $1\leq i \leq k$.
%            \item For $x_i \in V_i$ and $x_j \in V_j$ the length of a directed walk from $x_i$ t o $x_j$ is congruent to $j-i$ modulo $k$. More precisely, there exists a positive integer $N$ such that there are directed walks from $x_i$ to $x_j$ of every length $j - i + tk$ with $t\geq N$, ($1 \leq i , j \leq k$).
%        \end{enumerate}
    \end{lem}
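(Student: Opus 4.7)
The plan is to reduce the claim to a single statement about closed walks, then prove that statement by decomposing every closed walk into cycles. Fix two directed walks $w_1, w_2$ from $v_i$ to $v_j$, with lengths $\ell_1, \ell_2$. Since $G$ is strongly connected, there exists at least one directed walk $w^*$ from $v_j$ back to $v_i$; call its length $\ell^*$. Concatenating $w^*$ to each of $w_1$ and $w_2$ yields two closed walks based at $v_i$ with lengths $\ell_1+\ell^*$ and $\ell_2+\ell^*$, respectively. Hence it suffices to show that both of these closed-walk lengths are divisible by $\eta$, because then their difference $\ell_1-\ell_2$ is also divisible by $\eta$, which is exactly the desired congruence.

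The core lemma to establish is therefore: \emph{the length of every closed directed walk in $G$ is a multiple of $\eta$.} I would prove this by strong induction on the length $m$ of the closed walk $v_{i_0}\rightarrow v_{i_1}\rightarrow\cdots\rightarrow v_{i_m}=v_{i_0}$. If no intermediate node is repeated (i.e., $v_{i_0},\ldots,v_{i_{m-1}}$ are pairwise distinct), then the walk is a cycle of length $m$; by definition, $\eta$ divides $m$ and we are done. Otherwise, some vertex appears at two distinct positions $0\le p<q\le m$ with $v_{i_p}=v_{i_q}$ and $(p,q)\neq(0,m)$. Splitting the walk at positions $p$ and $q$ produces two strictly shorter closed walks: the sub-walk from $v_{i_p}$ to $v_{i_q}$ of length $q-p$, and the sub-walk obtained by concatenating the initial and final pieces, of length $m-(q-p)$. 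Both are closed walks of length less than $m$, so by the inductive hypothesis their lengths are divisible by $\eta$, and so is their sum $m$.

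Combining the two steps, $\ell_1+\ell^*\equiv 0\pmod\eta$ and $\ell_2+\ell^*\equiv 0\pmod\eta$, hence $\ell_1\equiv\ell_2\pmod\eta$, which completes the argument for an arbitrary pair of walks between $v_i$ and $v_j$.

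The main obstacle, if any, is purely bookkeeping in the induction step: one must verify that the splitting at a repeated vertex really does yield two closed walks that are both strictly shorter than the original, which requires the condition $(p,q)\neq(0,m)$ when choosing the repeated vertex. Once that is set up correctly, the rest is a clean reduction using the strong connectivity of $G$ and the definition of $\eta$ as the greatest common divisor of cycle lengths. No additional machinery beyond Definitions~\ref{Condensation} and~\ref{df:imprimitivity} is needed.
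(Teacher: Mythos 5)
Your argument is correct: the reduction to closed walks via a return walk from $v_j$ to $v_i$ (which exists by strong connectivity), together with the inductive decomposition of any closed walk into strictly shorter closed walks at a repeated vertex, is exactly the standard proof of this fact, and your care with the case $(p,q)\neq(0,m)$ closes the only real gap in the induction step. The paper itself gives no proof, citing \cite{brualdi1991combinatorial} instead, and your argument matches the one found there.
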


\subsubsection{Matrices}
We first define the reducibility of a matrix.
\begin{defnn}[Reducible matrix~\cite{brualdi1991combinatorial}]\label{df:reducible}
        A matrix $A$ of order $n$ is called reducible if there exists a permutation matrix $P \in {\mathcal{L}}_{n\times n}$ %, and an integer $k$ with $1\leq k \leq n-1$
        such that
         \begin{equation}\label{eq:redu}
            P^{\top}AP=
            \left(
              \begin{array}{cc}
                B & C \\
                \textbf{0} & D \\
              \end{array}
            \right)
         \end{equation}
         where  $B$ and $D$ are square matrices of order at least one. A matrix is said to be irreducible if it is not reducible. % $B \in \mathbb{R}^{k\times k}$, $D \in \mathbb{R}^{(n-k)\times (n-k)}$, $C  \in \mathbb{R}^{k\times (n-k)}$ and $\textbf{0} \in \mathbb{R}^{(n-k)\times k}$ is a zero matrix.
    \end{defnn}

%If a matrix $M$ has all entries being nonnegative, {\color{red} the \emph{Boolean form} of $M$, denoted by $\mathscr{B}(M)$, is a Boolean matrix with the $ij$th entrie $1$ if $M_{ij}>0$, and the $ij$th entrie $0$ if $M_{ij}=0$.}
For the rest of this paper, we let $A$ be a Boolean matrix, i.e., all entries of $A$ are either $0$ or $1$.  It should be clear that there is an one-to-one correspondence between the set of Boolean matrices of order $n$ and the set of digraphs of order $n$. We then have the following lemmas.
\begin{lem}[\cite{brualdi1991combinatorial}] \label{irreducible_strong}
        The matrix  $A$ of order $n$ is irreducible if and only if its digraph $\mathcal{G}(A)$ is strongly connected.
\end{lem}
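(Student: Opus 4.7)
The plan is to prove both directions by contrapositive, working with the natural correspondence between the block triangular form of $A$ and the absence of edges between two vertex subsets in $\mathcal{G}(A)$.

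For the direction ``$\mathcal{G}(A)$ strongly connected $\Rightarrow$ $A$ irreducible'', I would prove the contrapositive. Assume $A$ is reducible, so there is a permutation matrix $P$ with $P^{\top} A P$ in the block form of \eqref{eq:redu}. Reading off what this form means at the graph level: after relabeling the vertices via $P$, the vertex set $V$ is partitioned into $V = V_1 \sqcup V_2$ (corresponding to the row/column index sets of the $B$ and $D$ blocks, both of size at least one), such that the off-diagonal zero block forces $A_{ij}=0$ for every $v_i \in V_2$ and every $v_j \in V_1$. Hence there is no edge out of $V_2$ into $V_1$, and therefore no directed walk from any vertex in $V_2$ to any vertex in $V_1$. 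Picking any $v_i \in V_2$ and $v_j \in V_1$ witnesses that $\mathcal{G}(A)$ is not strongly connected.

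For the converse ``$A$ irreducible $\Rightarrow$ $\mathcal{G}(A)$ strongly connected'', I would again argue the contrapositive. Suppose $\mathcal{G}(A)$ is not strongly connected. Then there exist vertices $v_i, v_j$ with no directed walk from $v_i$ to $v_j$ (the other direction is handled symmetrically). Let $R \subseteq V$ be the set of all vertices reachable from $v_i$ by a directed walk, together with $v_i$ itself. Then $v_i \in R$ and $v_j \notin R$, so both $R$ and $V \setminus R$ are nonempty. By the definition of reachability, every edge that starts inside $R$ must end inside $R$; equivalently, there is no edge from $R$ to $V \setminus R$. Choosing a permutation $P$ that lists the vertices of $V \setminus R$ first and those of $R$ second yields $P^{\top} A P$ in precisely the form \eqref{eq:redu}, with $B$ indexed by $V \setminus R$, $D$ indexed by $R$, and the lower-left block equal to zero because no edge goes from $R$ to $V \setminus R$. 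Hence $A$ is reducible.

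I expect no serious obstacle here; the only point requiring care is the row/column convention $A_{ij}=1 \Leftrightarrow v_i \to v_j$, which determines whether the zero block of \eqref{eq:redu} encodes ``no edges from $R$ to $V\setminus R$'' or the reverse. Once that bookkeeping is pinned down, each direction is an essentially one-line graph-theoretic translation, and together they give the desired equivalence.
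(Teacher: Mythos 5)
The paper states this lemma without proof, citing \cite{brualdi1991combinatorial}, and your argument is exactly the standard one: both directions are correct, and you rightly flag the only delicate point, namely that under the convention $A_{ij}=1\Leftrightarrow v_i\to v_j$ the zero block of \eqref{eq:redu} encodes the absence of edges from the $D$-indexed vertex set into the $B$-indexed set, so that taking $R$ to be the reachable set of $v_i$ and listing $V\setminus R$ first produces precisely that form. The only caveat, which concerns the lemma as stated rather than your proof, is the degenerate case $n=1$ with $A=(0)$: Definition~\ref{df:reducible} makes such a matrix vacuously irreducible (no split into two blocks of order at least one exists), while the paper treats a single node without a self-loop as not strongly connected (a T1SCC); this standard convention mismatch is inherited from the source and does not affect how the lemma is used in the paper.
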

\begin{lem}[\cite{brualdi1991combinatorial}] \label{lem:norm}
        Let $A$ be a matrix of order $n$. Then there exists a permutation matrix $P$ of order $n$ and an integer $S\geq 1$ such that the {\rm Frobenius normal form} of $(\ref{eq:redu})$ can be written as
        \begin{equation}\label{normal_form}
            P^{\top}AP= \left(
                \begin{array}{cccc}
                    A^{1,1} &A^{1,2} &\cdots &A^{1,S}\\
                    \textbf{0}  &A^{2,2} &\cdots &A^{2,S} \\
                    \vdots &\vdots  &\ddots & \vdots \\
                    \textbf{0} &\textbf{0}  &\cdots &A^{S,S}
                \end{array}
                \right).
        \end{equation}
        %where $A^{1,1},A^{2,2}, ..., A^{S,S}$ are either the square irreducible matrices (i.e., T2SCCs) or the 1-by-1 zero matrices (i.e., T1SCCs). %which are uniquely determined to within simultaneous permutation of their lines, but their ordering in is not necessarily unique.
        %The Eq.~$(\ref{normal_form})$ is the normal form of the Eq.~$(\ref{eq:redu})$.
\end{lem}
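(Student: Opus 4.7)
The plan is to construct the permutation $P$ by performing a topological sort on the strongly connected components (SCCs) of $\mathcal{G}(A)$ and then reindexing nodes accordingly. First, decompose the node set of $\mathcal{G}(A)$ into SCCs $G_1=(V_1,E_1),\ldots,G_S=(V_S,E_S)$ as described in the preliminaries. By Definition~\ref{Condensation}, the condensation digraph $\mathcal{G}^*(A)$ has no closed directed walks, so it is a finite DAG and therefore admits a topological ordering. After relabelling if necessary, I assume the SCCs are indexed so that every edge of $\mathcal{G}^*(A)$ goes from some $G_i$ to some $G_j$ with $i<j$.

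Next, I would define $P$ as the permutation matrix corresponding to the bijection on $[1,n]$ that lists the node indices of $V_1$ first, then those of $V_2$, and so on up through $V_S$ (with the internal order within each $V_i$ chosen arbitrarily). Then $P^\top A P$ is the adjacency matrix of $\mathcal{G}(A)$ under this new labelling, and it partitions naturally into blocks $A^{i,j}$ whose rows and columns are indexed by $V_i$ and $V_j$, respectively. By construction, $A^{i,j}$ records exactly the edges of $\mathcal{G}(A)$ going from vertices in $V_i$ into vertices in $V_j$.

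The final step is to verify the zero pattern. For $i>j$, any nonzero entry of $A^{i,j}$ would correspond to an edge from a node in $V_i$ to a node in $V_j$, which by Definition~\ref{Condensation} would induce an edge $G_i\to G_j$ in $\mathcal{G}^*(A)$, contradicting the topological ordering $i<j$. Hence $A^{i,j}=\textbf{0}$ whenever $i>j$, which is exactly the block upper triangular shape $(\ref{normal_form})$. Since each $V_i$ is nonempty, every diagonal block $A^{i,i}$ has order at least one, as required.

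I do not anticipate a serious obstacle: the existence of a topological sort on a finite DAG is standard, and the equivalence between the SCC partition of $\mathcal{G}(A)$ and the resulting row/column block structure after the permutation is immediate from Lemma~\ref{irreducible_strong} and Definition~\ref{Condensation}. The only point that demands a little care is the convention for isolated single nodes (with or without self-loop), which are treated as trivial SCCs in the paper's framework, so each $V_i$ legitimately contributes a diagonal block and no auxiliary argument is required.
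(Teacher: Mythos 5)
Your proof is correct. The paper does not prove Lemma~\ref{lem:norm} itself (it is quoted from \cite{brualdi1991combinatorial}), and your argument --- partition the nodes into SCCs, topologically order the acyclic condensation digraph $\mathcal{G}^*(A)$, relabel via the induced permutation $P$, and check that any nonzero entry in a block $A^{i,j}$ with $i>j$ would force a forbidden edge $G_i\to G_j$ in the condensation --- is exactly the standard derivation of the Frobenius normal form that the citation relies on. It also correctly accounts for why each diagonal block $A^{i,i}$ is either irreducible (for a T2SCC, via Lemma~\ref{irreducible_strong}) or the $1\times 1$ zero matrix (for a T1SCC), matching the paper's remark immediately after the lemma.
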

If $A$ in Lemma~\ref{lem:norm} is the adjacency matrix of digraph $G$, then $A^{1,1},A^{2,2},\ldots,A^{S,S}$ are adjacency matrices of the SCCs of $G$. Specifically, if the SCC $G_i$ is a T2SCC, then $A^{i,i}$ is a square irreducible matrix; if the SCC $G_i$ is a T1SCC, then $A^{i,i}$ is a 1-by-1 zero matrix.

We next define the primitivity of a matrix.
\begin{defnn}[Primitive matrix~\cite{brualdi1991combinatorial}]\label{df:matrixprimit}
    A nonnegative matrix $A\in \mathbb{R}^{n\times n}$ is primitive if there exits an integer $j\ge 1$ such that $A^j>0$. If $A$ is primitive, the smallest $j$ such that $A^j>0$ is called the {\rm exponent} of $A$, denoted by $\gamma(A)$.
\end{defnn}
We note that if matrix $A$ is primitive, then it is also irreducible. Further, we also have the following results on primitive matrices.
\begin{lem}[\cite{brualdi1991combinatorial}]\label{lem:gamma}
If A is primitive, then $\gamma(A)\leq (n-1)^2+1$.
\end{lem}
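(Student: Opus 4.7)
The plan is to work in the digraph $G = \mathcal{G}(A)$ and use the standard correspondence between entries of $A^{(k)}$ and walk lengths: $(A^{(k)})_{ij} > 0$ iff there is a walk in $G$ from $v_i$ to $v_j$ of length exactly $k$. Thus the claim $\gamma(A) \leq (n-1)^2 + 1$ is equivalent to producing, for every ordered pair $(v_i,v_j)$, a walk of length exactly $(n-1)^2+1$. The case $n=1$ is immediate, so I assume $n \geq 2$.

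First I would extract the needed combinatorial structure from primitivity. Since $A$ is primitive, it is irreducible, so by Lemma~\ref{irreducible_strong} the digraph $G$ is strongly connected. Moreover, if the index of imprimitivity $\eta(G)$ were larger than $1$, Lemma~\ref{imprimitivity_property} would force the walk lengths between any fixed pair of vertices to lie in a single residue class modulo $\eta$, contradicting the existence of a $j$ with $A^j > 0$. Hence $\eta(G) = 1$, i.e.\ the cycle lengths of $G$ have greatest common divisor $1$.

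Next, let $s$ be the length of a shortest cycle of $G$; I would show $s \leq n-1$. Indeed, if every cycle in $G$ had length $\geq n$, then in a digraph on $n$ vertices every cycle would necessarily be Hamiltonian of length exactly $n$, which would make every cycle length divisible by $n$, giving $\eta(G)\geq n \geq 2$ and contradicting the previous paragraph.

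The heart of the proof is a numerical-semigroup step. Fix any vertex $w$ lying on a shortest cycle. The set $\Sigma_w$ of lengths of closed walks at $w$ is an additive sub-semigroup of $\mathbb{Z}^+$ that contains $s$ and satisfies $\gcd(\Sigma_w)=1$ (because in a strongly connected digraph the gcd of closed-walk lengths at a single vertex equals the gcd of all cycle lengths). A Schur--Frobenius style argument then shows that every integer $m \geq (s-1)^2$ lies in $\Sigma_w$; combined with $s \leq n-1$ this gives the uniform threshold $(n-2)^2$.

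The walk is then assembled as follows. Given $(v_i,v_j)$, use strong connectivity to pick a path $v_i \to w$ of length $a \leq n-1$ and a path $w \to v_j$ of length $b \leq n-1$. Setting $m := (n-1)^2 + 1 - a - b$, one has
\[
m \;\geq\; (n-1)^2 + 1 - 2(n-1) \;=\; (n-2)^2 \;\geq\; (s-1)^2,
\]
so $m \in \Sigma_w$ and there is a closed walk at $w$ of length exactly $m$. Concatenating the $v_i \to w$ path, that closed walk, and the $w \to v_j$ path yields a walk from $v_i$ to $v_j$ of length exactly $(n-1)^2+1$, as required.

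The main obstacle is the numerical-semigroup estimate in the fourth step: one must pin down that a gcd-$1$ sub-semigroup of $\mathbb{Z}^+$ containing a smallest element $s$ has Frobenius threshold at most $(s-1)^2$, which amounts to producing a second closed-walk length at $w$ that is coprime to $s$ and small enough to control the two-generator Frobenius bound. Once that is in hand, the rest is purely arithmetic bookkeeping, precisely arranged so that the diameter bound $\leq n-1$ on the two end-paths and the bound $s \leq n-1$ combine to yield the Wielandt number $(n-1)^2+1$.
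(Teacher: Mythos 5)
The paper offers no proof of Lemma~\ref{lem:gamma}; it is Wielandt's bound, quoted from \cite{brualdi1991combinatorial}, so your attempt can only be judged against the standard argument. Your skeleton (strong connectivity, $\eta(G)=1$ via Lemma~\ref{imprimitivity_property}, girth $s\le n-1$, entry path plus closed walk at a girth-cycle vertex $w$ plus exit path) is the right shape, but the pivotal fourth step is false: it is \emph{not} true that every integer $m\ge(s-1)^2$ is a closed-walk length at $w$. Take the Wielandt digraph on $n=4$ vertices, with edges $1\to2\to3\to4\to1$ and $4\to2$: the girth is $s=3$, the vertex $w=2$ lies only on cycles of lengths $3$ and $4$, so $\Sigma_w$ is the numerical semigroup generated by $3$ and $4$, which omits $5$ even though $5\ge(s-1)^2=4$. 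Your own proposed repair --- producing a second closed-walk length $t$ at $w$ coprime to $s$ and invoking the two-generator Frobenius bound --- cannot work either: every nonzero element of $\Sigma_w$ is a sum of cycle lengths and hence at least $s$, so any such $t$ satisfies $t\ge s+1$, and the Sylvester--Frobenius conductor $(s-1)(t-1)\ge(s-1)s$ strictly exceeds your target $(s-1)^2$ for all $s\ge2$. Consequently the value $m=(n-1)^2+1-a-b$ that your concatenation requires is not guaranteed to lie in $\Sigma_w$, and the proof collapses at exactly the step you flagged as "the main obstacle."

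The classical proof closes the gap by being stingier with the entry path rather than stronger on the semigroup: one proves $\gamma(A)\le n+s(n-2)$, which equals $(n-1)^2+1$ precisely at the worst case $s=n-1$. The path from $v_i$ into the girth cycle is taken of length at most $n-s$ (its interior avoids the $s$ cycle vertices), and the remaining budget is handled not by a Frobenius estimate at a single vertex but by analyzing the digraph of $A^{(s)}$, which is strongly connected and carries loops at all $s$ vertices of the short cycle, so that walks of every sufficiently large length in the needed residue class can be exhibited. If you wish to salvage your outline, you must both shorten the entry path to length $\le n-s$ and replace the $(s-1)^2$ threshold by a correct conductor bound for $\Sigma_w$; as written, the argument has a genuine gap.
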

\begin{prop}[\cite{brualdi1991combinatorial}]\label{prop:biject}
A digraph $G$ is primitive if and only if its adjacency matrix $A$ is primitive.
\end{prop}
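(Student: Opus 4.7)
The plan rests on the standard identification between entries of matrix powers and walks in the digraph: for the nonnegative adjacency matrix $A$ of $G$ of order $n$, the entry $(A^k)_{ij}$ counts the walks of length $k$ from $v_i$ to $v_j$, so $(A^k)_{ij}>0$ iff such a walk exists. Consequently $A^k>0$ is equivalent to the existence of a walk of length exactly $k$ between every ordered pair of vertices, and I would make this equivalence the workhorse of both directions.

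For the direction $A$ primitive $\Rightarrow$ $G$ primitive, suppose $A^j>0$. Then walks of length $j$ exist between every pair, so $G$ is strongly connected. The case $n=1$ is trivial, so assume $n\geq 2$; strong connectivity forces every row of $A$ to be nonzero, whence $A^{j+1}=A\cdot A^j>0$ as well. Thus at every vertex $v$ there are closed walks of both lengths $j$ and $j+1$. Since $\eta=\eta(G)$ divides every cycle length and therefore every closed-walk length (decompose a closed walk into cycles), $\eta\mid j$ and $\eta\mid(j+1)$, forcing $\eta=1$; hence $G$ is primitive.

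The direction $G$ primitive $\Rightarrow$ $A$ primitive is the substantive half. My first step would be to prove that at every vertex $v$, the gcd $\eta_v$ of the closed-walk lengths at $v$ equals $\eta$. The nontrivial inclusion $\eta_v\mid\eta$ is obtained as follows: for each cycle $C$ of length $c$ passing through some vertex $w$, use strong connectivity to choose a walk $v\to w$ of length $a$ and a walk $w\to v$ of length $b$; then $a+b$ and $a+b+c$ are both closed-walk lengths at $v$, so their difference $c$ is divisible by $\eta_v$. Hence $\eta_v$ divides every cycle length, giving $\eta_v\mid\eta$. When $\eta=1$ we therefore have $\gcd$ of closed-walk lengths at $v$ equal to $1$, and since concatenation of closed walks adds their lengths, the Sylvester--Frobenius (numerical-semigroup) theorem yields an $N_v$ such that for every $k\geq N_v$ a closed walk of length exactly $k$ exists at $v$. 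To uniformize over pairs, pick for each ordered pair $(v_i,v_j)$ a walk $v_i\to v_j$ of some length $d_{ij}$ (possible by strong connectivity) and set $N^{*}=\max_{i,j}(N_{v_i}+d_{ij})$. Concatenating a closed walk at $v_i$ of length $\ell-d_{ij}$ with the fixed $v_i\to v_j$ walk produces a walk of length $\ell$ from $v_i$ to $v_j$ for every $\ell\geq N^{*}$, so $A^{\ell}>0$ for all such $\ell$, proving $A$ primitive.

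The principal obstacle is the localization step in the second direction: passing from ``$\eta=1$ holds globally as a gcd over all cycles'' to ``at each individual vertex the closed-walk lengths have gcd $1$''. Without this localization the numerical-semigroup argument has no anchor vertex and the uniform threshold $N^{*}$ cannot be constructed. The remaining ingredients -- closure of closed walks under concatenation, combining with strong connectivity to bridge from $v_i$ to $v_j$, and taking a maximum over finitely many ordered pairs -- are routine bookkeeping once the vertex-level statement is in hand.
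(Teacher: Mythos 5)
The paper does not prove this proposition; it is imported verbatim from \cite{brualdi1991combinatorial} (Definition~\ref{df:imprimitivity} already builds strong connectivity into what ``$G$ primitive'' means), so there is no in-paper argument to compare against. Your proof is correct and complete as a self-contained derivation of the standard result. The forward direction ($A^j>0$ implies $\eta=1$) via closed walks of consecutive lengths $j$ and $j+1$, together with the decomposition of any closed walk into cycles, is sound. The converse correctly isolates the one nontrivial step, namely the localization $\eta_v=\eta$ at each vertex, and your bridging argument ($a+b$ and $a+b+c$ both being closed-walk lengths at $v$, so $\eta_v\mid c$) is the right way to get it; the subsequent application of the numerical-semigroup theorem and the uniformization over ordered pairs via $N^{*}=\max_{i,j}(N_{v_i}+d_{ij})$ are exactly the tools the paper itself deploys elsewhere (your Sylvester--Frobenius step is the paper's Lemma~\ref{Frobenius_Schur}, and the localization is the same phenomenon underlying Lemma~\ref{imprimitivity_property}). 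The only points worth tightening are cosmetic: you should note that strong connectivity of $G$ with $n\ge 2$ guarantees the set of closed-walk lengths at each $v$ is nonempty before invoking the gcd, and that the Frobenius--Schur conclusion is applied to a finite subset of that set achieving gcd $1$; both are one-line remarks.
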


\subsubsection{Controllability and $k$ fixed-time controllability}

Given a BCN~\eqref{eq:1}, we can compute the matrices $\mathcal{M}$ and $\mathcal{F}$ as in~\eqref{controllabiiltymtraix}. The controllability and $k$ fixed-time controllability of the BCN can then be determined by the following theorems.
\begin{theorem}[\cite{aa18}]\label{thm:controlF}
    The BCN $(\ref{eq:1})$ is
    \begin{enumerate}
        \item controllable from $\delta_N^j$ to $\delta_N^i$, if and only if, $\mathcal{F}_{ij}=1$;
        \item controllable at $\delta_N^j$, if and only if, $Col_j(\mathcal{F})>0$;
        \item controllable, if and only if, $\mathcal{F}>0$.
    \end{enumerate}
\end{theorem}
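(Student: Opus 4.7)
The plan is to prove the theorem by first giving a combinatorial interpretation of the entries of $\mathcal{M}$ and its Boolean powers, then using a pigeonhole argument to truncate the sum defining $\mathcal{F}$ at $N$.

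First I would observe that for any fixed control $u(t)=\delta_M^r$, the algebraic form \eqref{eq:2} gives $x(t+1)=(L\ltimes \delta_M^r)\ltimes x(t)$, where $L\ltimes \delta_M^r \in \mathcal{L}_{N\times N}$. Consequently, when $x(t)=\delta_N^j$, the successor state is exactly $\mathrm{Col}_j(L\ltimes \delta_M^r)$. Taking the Boolean OR over all $r\in[1,M]$, I would conclude $\mathcal{M}_{ij}=1$ iff there exists a single control that steers $\delta_N^j$ to $\delta_N^i$ in one step. This identifies $\mathcal{G}(\mathcal{M})$ with the one-step reachability digraph on the state space $\Delta_N$.

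Next I would prove by induction on $k$ that $(\mathcal{M}^{(k)})_{ij}=1$ iff there is a control sequence $u(0),\dots,u(k-1)$ steering the BCN from $\delta_N^j$ to $\delta_N^i$ in exactly $k$ steps. The base case $k=1$ is the preceding paragraph. For the inductive step, the Boolean matrix product gives
\begin{equation*}
(\mathcal{M}^{(k+1)})_{ij}=\bigvee_{r=1}^{N}\bigl(\mathcal{M}_{ir}\wedge (\mathcal{M}^{(k)})_{rj}\bigr),
\end{equation*}
which equals $1$ iff there is some intermediate state $\delta_N^r$ reachable from $\delta_N^j$ in $k$ steps and from which $\delta_N^i$ is one-step reachable — exactly the concatenation of the two control sequences.

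For part (1), the $(\Leftarrow)$ direction of the Boolean sum definition of $\mathcal{F}$ is then immediate: if $\mathcal{F}_{ij}=1$, then $(\mathcal{M}^{(k)})_{ij}=1$ for some $k\in[1,N]$, giving a witnessing control sequence. For the $(\Rightarrow)$ direction, suppose the BCN is controllable from $\delta_N^j$ to $\delta_N^i$ with control horizon $T$. The key step — and the only nontrivial part — is showing that $T$ may be chosen in $[1,N]$. I would argue by a shortest-sequence pigeonhole: among all valid control sequences take one of minimal length $T^*$, producing trajectory $x(0)=\delta_N^j,x(1),\dots,x(T^*)=\delta_N^i$. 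If $T^*>N$, then among $x(0),\dots,x(T^*)$ at least two states coincide, say $x(s)=x(t)$ with $s<t$; splicing the controls $u(0),\dots,u(s-1),u(t),\dots,u(T^*-1)$ yields a shorter valid control sequence, contradicting minimality. Hence $T^*\le N$, so $(\mathcal{M}^{(T^*)})_{ij}=1$ and therefore $\mathcal{F}_{ij}=1$.

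Parts (2) and (3) follow immediately from part (1): the BCN is controllable at $\delta_N^j$ iff $\mathcal{F}_{ij}=1$ for every $i\in[1,N]$, i.e., $\mathrm{Col}_j(\mathcal{F})>0$; and it is controllable iff this holds for every source $\delta_N^j$, i.e., $\mathcal{F}>0$. The pigeonhole truncation in part (1) is the only conceptual obstacle; the rest is bookkeeping translating the semi-tensor-product dynamics into Boolean matrix products and invoking the inductive characterization of $\mathcal{M}^{(k)}$.
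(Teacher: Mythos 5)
Your proof is correct: the identification of $\mathcal{M}_{ij}$ with one-step reachability, the induction showing $(\mathcal{M}^{(k)})_{ij}=1$ iff $\delta_N^i$ is reachable from $\delta_N^j$ in exactly $k$ steps, and the pigeonhole/splicing argument truncating the horizon at $N$ are all sound, and parts (2) and (3) do follow immediately. The paper itself gives no proof of this theorem (it is quoted from the cited reference \cite{aa18}), and your argument is precisely the standard one used there, so there is nothing to flag.
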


%\begin{theorem}{\rm \cite{Laschov2012}} \label{las_1}
%    The BCN $(\ref{eq:1})$ is controllable if and only if $\mathcal{M}$ is irreducible.
%\end{theorem}

\begin{theorem}[\cite{Laschov2012}] \label{las_2}
    Consider the BCN $(\ref{eq:1})$.
    \begin{enumerate}
        \item The BCN $(\ref{eq:1})$ is controllable, if and only if, $\mathcal{M}$ is irreducible.
        \item The BCN $(\ref{eq:1})$ is $k$ fixed-time controllable, if and only if,  $\mathcal{M}^{(k)} > 0$.
        \item If the matrix $\mathcal{M}$ is primitive, then $\gamma(\mathcal{M})\leq N^2-2N+2$ and the BCN $(\ref{eq:1})$ is $\gamma(\mathcal{M})$ fixed-time controllable. If $\mathcal{M}$  is not primitive, then the BCN is not $k$ fixed-time controllable for any $k$.
        \item If the BCN $(\ref{eq:1})$ is $k$ fixed-time controllable, then it is $p$ fixed-time controllable for any $p\geq k$.
    \end{enumerate}
\end{theorem}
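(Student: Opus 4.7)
The plan is to establish a single combinatorial lemma about the Boolean powers of $\mathcal{M}$, from which all four parts follow by short arguments. The target lemma reads: for every $k \ge 1$ and every $i,j \in [1,N]$, $(\mathcal{M}^{(k)})_{ij}=1$ if and only if there exists a control sequence $u(0),\ldots,u(k-1)$ that steers the BCN from $\delta_N^j$ to $\delta_N^i$ in exactly $k$ steps. The base case $k=1$ unpacks the construction of $\mathcal{M}$: for each $u \in [1,M]$, $L\ltimes \delta_M^u$ is an $N\times N$ logical matrix whose $(i,j)$-entry is $1$ iff applying control $u$ at state $\delta_N^j$ sends the BCN to $\delta_N^i$, and Boolean-summing over $u$ recovers $\mathcal{M}$. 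The inductive step uses $\mathcal{M}^{(k+1)} = \mathcal{M} \times_{\mathscr{B}} \mathcal{M}^{(k)}$, which translates into concatenating a walk of length $k$ with a single-step edge.

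With this lemma in hand, Part~2 is immediate. Part~1 follows by re-expressing controllability as: for every ordered pair of nodes in $\mathcal{G}(\mathcal{M})$ there exists a walk (of some length) between them, which is precisely the statement that $\mathcal{G}(\mathcal{M})$ is strongly connected; by Lemma~\ref{irreducible_strong} this is equivalent to $\mathcal{M}$ being irreducible. For Part~3, if $\mathcal{M}$ is primitive then Lemma~\ref{lem:gamma} applied to the $N\times N$ matrix $\mathcal{M}$ yields $\gamma(\mathcal{M})\le (N-1)^2+1=N^2-2N+2$; since $\mathcal{M}$ is $\{0,1\}$-valued, the ordinary power $\mathcal{M}^{\gamma(\mathcal{M})}$ and the Boolean power $\mathcal{M}^{(\gamma(\mathcal{M}))}$ have identical supports, so Part~2 delivers $\gamma(\mathcal{M})$ fixed-time controllability. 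The converse half of Part~3 is analogous: if $\mathcal{M}$ is not primitive, $\mathcal{M}^{(k)}$ has a zero entry for every $k$, and Part~2 precludes $k$ fixed-time controllability for every $k$.

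The delicate piece, and the main obstacle, is Part~4, because an arbitrary Boolean matrix $B$ satisfying $B^{(k)}>0$ need not satisfy $B^{(k+1)}>0$ (for instance, if $B$ has a zero column the positivity collapses). The observation that rescues us for $\mathcal{M}$ is that every column of $\mathcal{M}$ contains at least one $1$: for any state $\delta_N^j$ and any control $\delta_M^u$, the BCN deterministically transitions to \emph{some} state $\delta_N^i$, so $(L\ltimes \delta_M^u)_{ij}=1$ and therefore $\mathcal{M}_{ij}=1$. Given this, I proceed by induction on $p\ge k$: assuming $\mathcal{M}^{(p)}>0$, compute
\[(\mathcal{M}^{(p+1)})_{ij} \;=\; \bigvee_{m=1}^{N}\bigl(\mathcal{M}^{(p)}_{im} \wedge \mathcal{M}_{mj}\bigr) \;=\; \bigvee_{m=1}^{N}\mathcal{M}_{mj} \;=\;1,\]
the final equality being the column observation. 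This closes the induction and finishes Part~4. The conceptual subtlety worth flagging is that this step is not purely formal Boolean-matrix manipulation: it genuinely exploits totality of the BCN dynamics, without which Part~4 could fail.
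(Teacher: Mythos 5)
The paper does not prove this theorem; it is imported verbatim from \cite{Laschov2012}, so there is no internal proof to compare against. Your argument is, however, a correct and self-contained derivation. The reduction of all four parts to the single lemma identifying $(\mathcal{M}^{(k)})_{ij}$ with $k$-step reachability from $\delta_N^j$ to $\delta_N^i$ is the standard route, and Parts~1--3 then follow exactly as you say (for Part~3, note that the ``not primitive'' direction is essentially definitional, since by Definition~\ref{df:matrixprimit} non-primitivity literally means no ordinary power of $\mathcal{M}$ is positive, and ordinary and Boolean powers of a $\{0,1\}$ matrix share the same support). The one place where a blind attempt could genuinely go wrong is Part~4, and you have correctly isolated the needed structural fact: $\mathcal{M}$ is a Boolean sum of logical matrices $L\ltimes\delta_M^u$, each column of which is a canonical vector, so every column of $\mathcal{M}$ contains a $1$; this totality of the dynamics is exactly what makes $\mathcal{M}^{(p)}>0 \Rightarrow \mathcal{M}^{(p+1)}>0$ go through, and your explicit counterexample-flavored remark that the implication fails for general Boolean matrices is the right thing to flag. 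This is the same mechanism used in \cite{Laschov2012}. I see no gap.
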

\begin{rmk}\label{rem:01} %In \cite{Cheng2009controllability}
    We note that the controllability of the BCN can be determined with matrix $\mathcal{F}$ by Theorem~\ref{thm:controlF}, while whether the BCN is the $k$ fixed-time controllable or not can only be determined with matrix $\mathcal{M}$ by Theorem~\ref{las_2}. Specifically, if $\mathcal{M}$ is primitive, then BCN $(\ref{eq:1})$  is $k$ fixed-time controllable for any $k \geq \gamma(\mathcal{M})$. When $\mathcal{M}$ is reducible, although the BCN~\eqref{eq:1} is not $k$ fixed-time controllable, we may still have some pairs $(x_0,x_d)$ that are $k$ fixed-time controllable.

    %Considering the controllability matrix $\mathcal{F}$ in Eq.~$(\ref{controllabiiltymtraix})$, the time-step information (i.e., the superscript $i$ of $\mathcal{M}$) is integrated into $\mathcal{F}$, whereas it can not be recovered from $\mathcal{F}$, signifying hopeless on the detailed controllability analysis of the time-step information and even the  long-term controllability behavior. Instead, Laschov and Margaliot \cite{Laschov2012} studied two kinds of controllability based on the properties (irreducible and primitive) of the nonnegative matrix $\mathcal{M}$ in the Theorem~$\ref{las_2}$. Especially, when $\mathcal{M}$ is primitive, then the BCN $(\ref{eq:1})$  is k fixed-time controllable for any $k \geq \gamma(\mathcal{M})$, which reveals the long-term controllability behavior.   Particularly,  a controllable BCN is either P-controllable or NP-controllable according as  $\mathcal{M}$ being primitive and imprimitive.
\end{rmk}
For convenience, we call BCN~\eqref{eq:1} \textit{P-controllable}  if it is $k$ fixed-time controllable for some integer $k>0$. We call BCN~\eqref{eq:1} \textit{NP-controllable} if it is controllable, but not $k$ fixed-time controllable for any $k>0$. Equivalently, BCN~\eqref{eq:1} is P-controllable if $\mathcal{M}$ is primitive; BCN~\eqref{eq:1} is NP-controllable if $\mathcal{M}$ is irreducible and imprimitive.

%textit{NP-controllable},  if BCN $(\ref{eq:1})$ is $k$ fixed-time controllable for some integer $k>0$ (i.e., $\mathcal{M}$ is primitive) or controllable but not $k$ fixed-time controllable for any integer $k>0$ (i.e., $\mathcal{M}$ is imprimitive), respectively.

%\subsection{Controllability and $k$ fixed-time controllability}

\section{Main Results}\label{section3}

Recall that in Problem~\ref{prob}, we aim to classify all state pairs of BCN~\eqref{eq:1} into four categories.
Equivalently, one wishes to obtain the {\rm controllability categorization matrix}
        $\mathcal{C}$.  We note that the Boolean form of $\mathcal{C}$ is exactly $\mathcal{F}$ in~\eqref{controllabiiltymtraix}, i.e.,
        $\mathscr{B}(\mathcal{C})\equiv\mathcal{F}$.

Evidently, the form of the controllability categorization matrix $\mathcal{C}$ is trivial in the following situation.
    %\begin{enumerate}
      If the BCN $(\ref{eq:1})$ is P-controllable (resp. NP-controllable), then, $\mathcal{C}=\textbf{2}_{N\times N}$ (resp. $\mathcal{C}=\textbf{3}_{N\times N}$).
    %\end{enumerate}
    In other words, $\mathcal{C}$ is trivial if $\mathcal{M}$ is irreducible, as it follows from the definitions that
    \begin{enumerate}
    \item If the BCN $(\ref{eq:1})$ is P-controllable, then for any pair of states $\delta_N^i$ and $\delta_N^j$, we have $|\rho(i,j)|=\infty$ and $|\sigma(i,j)|<\infty$.
    \item If the BCN $(\ref{eq:1})$ is NP-controllable, then for any pair of states $\delta_N^i$ and $\delta_N^j$, we have $|\rho(i,j)|=\infty$ and $|\sigma(i,j)|=\infty$.
    \end{enumerate}

    In the rest of this section, we investigate the case when $\mathcal{M}$ is reducible.

\subsection{Main theorem}

Let $G=(V,E)$ be the \textit{state transition digraph} of BCN~\eqref{eq:1}, where $V$ is the set of states, i.e., $V:=\Delta_N$, and $E:=\{\delta_N^i\rightarrow \delta_N^j~|~ \delta_N^j=Lu\delta_N^i \mbox{~for~some~} u \in \Delta_M\}$, i.e., an edge $\delta_N^i\rightarrow\delta_N^j$ exists in $E$ if there exists some control $u$ which drives the system from state $\delta_N^i$ to state $\delta_N^j$ in one step. Let $\bar{\mathcal{M}}$ be the adjacency matrix of $G$, then, $\bar{\mathcal{M}}:=\mathcal{M}^{\top}$, where $\mathcal{M}$ is defined as in~\eqref{controllabiiltymtraix}. Then, by Lemma~\ref{lem:norm}, we can write $\bar{\mathcal{M}}$ in the Frobenius normal form as in~\eqref{normal_form} in a similar manner, with the replacement of $A$ in~\eqref{normal_form} with $\bar{\mathcal{M}}$. Then we have that
$\bar{\mathcal{M}}^{1,1}\in \mathscr{B}^{q_1\times q_1},\bar{\mathcal{M}}^{2,2} \in\mathscr{B}^{q_2\times q_2}, ..., \bar{\mathcal{M}}^{S,S}\in \mathscr{B}^{q_S\times q_S}$ where $\sum_{i=1}^S q_i=N$.
The BCN~\eqref{eq:1} thereby has $S$ SCCs, denoted by, $\mathcal{X}_1,..., \mathcal{X}_S$,
%\begin{equation*}
%    \mathcal{X}_1=\{x_1^1, ...,  x_1^{q_{1}}\},..., \mathcal{X}_S=\{x_S^1, ...,  x_S^{q_{S}}\},
%\end{equation*}
with $\bigcup_{i=1}^S\mathcal{X}_i=\Delta_N$. Additionally, we denote by $Id(i)$ the index of the SCC that the state $\delta_N^i$ belongs to. From the Definition~\ref{Condensation}, we can construct the condensation digraph $\mathcal{G}^*(\bar{\mathcal{M}})$ of the BCN~\eqref{eq:1}  as well as its adjacency matrix $\bar{\mathcal{M}}^*$.

Given a pair of states $\delta_N^i$ and $\delta_N^j$, which are two nodes in the state transition graph $G$, let $P_{ij}$ be the set of paths from $\delta_N^i$ to $\delta_N^j$, and $P^*_{Id(i),Id(j)}$ be the set of paths from $Id(i)$ to $Id(j)$ in the condensation graph $\mathcal{G}^*(\bar{M})$. Let $K:=|P^*_{Id(i),Id(j)}|$. We denote these paths in the condensation graph by $p^{*1}_{Id(i),Id(j)},p^{*2}_{Id(i),Id(j)},\ldots,p^{*K}_{Id(i),Id(j)}$. Then, we use the following method to partition the set $P_{ij}$ into $K$ subsets $P_{ij}^1,\ldots, P_{ij}^K$.

\emph{For any path $p_{ij}\in P_{ij}$, we replace every node $\delta_N^l\in p_{ij}$ with the node $Id(l)$, if the resulting path, ignoring self-loops, is $p^{*k}_{Id(i),Id(j)}$, then $p_{ij}\in P_{ij}^k$.}

With the above partitions, we further define $\eta_{ij}^k$ to be the greatest common divisor of the indexes of primitivity of the T2SCC along path $p^{*k}_{Id(i),Id(j)}$. If there is no T2SCC along the path  $p^{*k}_{Id(i),Id(j)}$, we let $\eta_{ij}^k=0$ and $\pi_{ij}^k=\emptyset$. Otherwise, we define
\begin{equation}\label{pi_ijk}
        \pi_{ij}^k = \left\{l(p_{ij}^k) \mod \eta_{ij}^k~|~p_{ij}^k\in P_{ij}^k\right\},
\end{equation}
let $\bar{\eta}_{ij}$ be the least common multiple of $\{\eta_{ij}^1,\ldots,\eta_{ij}^K\}$, and  $\Tilde{\pi}_{ij}^k:=\left\{a + b\cdot  \eta_{ij}^k~|~a \in \pi_{ij}^k,~b \in \left[0, \frac{\bar{\eta}_{ij}}{\eta_{ij}^k} - 1\right]\right\}$. Then, let $\bar{\pi}_{ij}:=\bigcup \Tilde{\pi}_{ij}^k$. With these definitions, we are in a position to present our main theorem.

    \begin{theorem}\label{thm:main02}
        Considering the BCN $(\ref{eq:1})$,   we have
        \begin{enumerate}
            \item $\mathcal{C}_{ji}=0$, if and only if, $\mathcal{F}_{ji}=0$.
            \item $\mathcal{C}_{ji}=1$, if and only if, $\bar{\pi}_{ij}=\emptyset$ and $\mathcal{F}_{ji}=1$.
            \item $\mathcal{C}_{ji}=2$, if~and~only~if, $\bar{\pi}_{ij}=[0, \bar{\eta}_{ij}-1]$.
            \item $\mathcal{C}_{ji}=3$, if and only if, $\bar{\pi}_{ij} \neq \emptyset$ and $\bar{\pi}_{ij}\neq [0, \bar{\eta}_{ij}-1]$.
        \end{enumerate}
    \end{theorem}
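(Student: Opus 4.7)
The plan is to reinterpret $\rho(i,j)$ as the set of lengths of walks from $\delta_N^i$ to $\delta_N^j$ in the state-transition digraph $G$ (this is immediate from the definition of $G$, since $\delta_N^i \to \delta_N^j \in E$ iff some control sends one state to the other in one step) and to determine this set by decomposing each walk according to the unique simple path it traces in the condensation DAG $\mathcal{G}^*(\bar{\mathcal{M}})$. Because the condensation is acyclic, after collapsing SCC-internal repetitions, every walk from $\delta_N^i$ to $\delta_N^j$ projects to exactly one of the simple paths $p^{*1}_{Id(i),Id(j)},\ldots,p^{*K}_{Id(i),Id(j)}$; this induces a partition of the set of walks of each length into $K$ classes, extending the partition of $P_{ij}$ given in the statement.

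Within a single class $k$, I would write each walk as a concatenation of internal walks inside the SCCs traversed by $p^{*k}$, joined by inter-SCC edges. The inter-SCC edges contribute a fixed length, and each T1SCC contributes exactly one step, so all length flexibility lives in the internal walks through the T2SCCs. For a T2SCC of index $\eta$, the standard strengthening of Lemma \ref{imprimitivity_property} says that the set of walk lengths between two fixed nodes is, above some threshold, exactly one residue class modulo $\eta$. Combining these contributions across all T2SCCs on $p^{*k}$ and invoking a Sylvester--Frobenius style argument for nonnegative integer combinations of the $\eta_\ell$'s yields that the set of class-$k$ walk lengths, intersected with $[L_0^k,\infty)$ for some threshold $L_0^k$, equals $\{\,L : L \bmod \eta_{ij}^k \in \pi_{ij}^k\,\}$. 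Lifting each $\pi_{ij}^k$ to the common period $\bar{\eta}_{ij}$ via $\tilde{\pi}_{ij}^k$ and taking the union over $k$ gives $\rho(i,j) \cap [L_0,\infty) = \{\,L : L \bmod \bar{\eta}_{ij} \in \bar{\pi}_{ij}\,\}$ for a single uniform threshold $L_0$.

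With this eventual periodicity in hand, the four equivalences follow by case analysis. Case (1) is Theorem \ref{thm:controlF}(1). For case (2), if $\mathcal{F}_{ji}=1$ but $\bar{\pi}_{ij}=\emptyset$, then every condensation path from $Id(i)$ to $Id(j)$ traverses only T1SCCs, so walks in $G$ from $\delta_N^i$ to $\delta_N^j$ are simple paths of bounded length, whence $\rho(i,j)$ is finite and nonempty. Case (3) is exactly when $\bar{\pi}_{ij}$ meets every residue class modulo $\bar{\eta}_{ij}$, so $\rho(i,j)$ contains every integer beyond $L_0$; and case (4) is the remaining subcase where $\bar{\pi}_{ij}$ is a proper nonempty subset of the residues, giving infinitely many integers in both $\rho(i,j)$ and $\sigma(i,j)$. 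The converse directions in each case are read off from the same periodic description.

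The principal obstacle I anticipate is the careful proof of the eventual-periodicity characterization, specifically the \emph{exactness} claim rather than a one-sided inclusion. The upper bound (no residue outside $\bar{\pi}_{ij}$ ever arises) follows from Lemma \ref{imprimitivity_property} applied independently in each T2SCC, together with the fixed number of inter-SCC edges along $p^{*k}$. The lower bound (every residue in $\bar{\pi}_{ij}$ is attained at every common-period step beyond $L_0$) requires combining internal loops of several T2SCCs with possibly different imprimitivity indices, which is where the gcd $\eta_{ij}^k$ and the lcm $\bar{\eta}_{ij}$ enter, and where a uniform threshold $L_0$ valid across the $K$ classes must be established. Once this periodic structure is pinned down, the remainder of the theorem reduces to counting residue classes.
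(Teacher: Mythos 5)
Your proposal is correct and follows essentially the same route as the paper's proof: partition walks by the condensation path they project to, use Lemma~\ref{imprimitivity_property} together with a Frobenius--Schur (Sylvester--Frobenius) index argument to show each class of walk lengths is eventually exactly the residue classes $\pi_{ij}^k$ modulo $\eta_{ij}^k$, lift to the common modulus $\bar{\eta}_{ij}$ via $\Tilde{\pi}_{ij}^k$, and read off the four categories from whether $\bar{\pi}_{ij}$ is empty, all of $[0,\bar{\eta}_{ij}-1]$, or a proper nonempty subset. The "principal obstacle" you flag (exactness of the eventual-periodicity description) is precisely what the paper resolves with Lemma~\ref{Frobenius_Schur} and the threshold $n'$ in~\eqref{nprime}, so your plan matches the published argument in both structure and key lemmas.
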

    We first provide a proof for the bulletins~(1) and~(2) of Theorem~\ref{thm:main02}.
In the next subsection, we will provide a complete proof of bulletins~(3) and~(4). We will also provide a follow-up result in section~\ref{last}.

\begin{proof}[Proof of Theorem~\ref{thm:main02}, part~I]
	We now prove the first two bulletins of Theorem~\ref{thm:main02}.
	\begin{enumerate}
		\item We note that by definition, $\mathcal{C}_{ji}=0$ if and only if $\delta_N^j$ is unreachable from $\delta_N^i$, or equivalently, BCN~\eqref{eq:1} is uncontrollable from $\delta_N^i$ to $\delta_N^j$, which, by Theorem~\ref{thm:controlF}, holds if and only if $\mathcal{F}_{ji}=0$.
		\item We show that if $\bar{\pi}_{ij}=\emptyset$ and $\mathcal{F}_{ji}=1$, then $\mathcal{C}_{ji}=1$. The other direction can be similarly shown. By definition, $\mathcal{C}_{ji}=1$ if and only if $0<|\rho(i,j)|<\infty$ and $|\sigma(i,j)|=\infty$. Note that $\bar{\pi}_{ij}=\emptyset$ if and only if $\Tilde{\pi}_{ij}^k=\emptyset$ (i.e., $\pi_{ij}^k=\emptyset$) for all $k \in [1,K]$. This implies that $\eta_{ij}^k=0$ for all $k$ and there is no T2SCC along the path $p^{*k}_{Id(i),Id(j)}$ for all $k$. Therefore, there is no T2SCC along any path $p_{ij}\in P_{ij}$. We thus have that $\rho(i,j) = \{l(p_{ij})\mid p_{ij}\in P_{ij}\}$. Since $|P_{ij}|$ is finite, we have that $|\rho(i,j)|<\infty$. Since $\mathcal{F}_{ji}=1$, there exists at least one path $p_{ij}\in P_{ij}$, which implies that $|P_{ij}|>0$ and $|\rho(i,j)|>0$. Lastly, for any positive integer $k\notin \{l(p_{ij})\mid p_{ij}\in P_{ij}\}$, we have that $k\in \sigma(i,j)$. This implies that $|\sigma(i,j)|=\infty$.
	\end{enumerate}
\end{proof}

\subsection{Analysis and proof of Theorem~\ref{thm:main02}}

In this subsection, we show the last two bulletins of Theorem~\ref{thm:main02}.  We first consider the case that $|P^*_{Id(i), Id(j)}|=1$, i.e., there is only one path from $Id(i)$ to $Id(j)$ in the condensation graph. Later we will extend to the general case where $|P^*_{Id(i), Id(j)}|=K$ for any positive integer $K$.

For ease of notation, we now use $p_{Id(i),Id(j)}^*$ to denote the path from $Id(i)$ to $Id(j)$ in the condensation graph, and let $\eta_{ij}$ be the greatest common divisor of the indexes of imprimitivity of the T2SCCs along the path $p_{Id(i),Id(j)}^*$. Again, if there is no T2SCC along the path, let $\eta_{ij}=0$.

Let the cycles in the T2SCCs along path $p_{Id(i),Id(j)}^*$ be $c_1,...,c_k$ with the lengths equal to $l(c_1),...,l(c_k)$, respectively. Then any walk $w_{ij}$ of  $\mathcal{G}(\bar{\mathcal{M}})$ has length of the form
    \begin{equation*}
        l(w_{ij})=l(p_{ij}) + a_1 \cdot l(c_1)+\cdots+a_k \cdot l(c_k),
    \end{equation*}
    where $a_1,\ldots, a_k$ are nonnegative integers.
    Note that, from~\cite{brualdi1991combinatorial}, we have the following lemma.

    \begin{lem} [\cite{brualdi1991combinatorial}]\label{Frobenius_Schur}
        Let $\Psi=\{l_1, l_2,...,l_k\}$ be a nonempty set of positive integers and  $\eta$ be the greatest common divisor of the integers in $\Psi$. Then there  exists a smallest positive integer $\phi(l_1, l_2,...,l_k)$, called the {\rm Frobenius-Schur index} of $\Psi$, such that for any integer $n\geq \phi(l_1, l_2,...,l_k)$, $n\eta$ can be expressed as a nonnegative linear combination of  these integers, i.e., as a sum,
        %\begin{equation*}
            $n\eta=a_1l_1+a_2l_2+\cdots+a_kl_k$,
        %\end{equation*}
        where $a_1,a_2,...,a_k$ are nonnegative integers.
    \end{lem}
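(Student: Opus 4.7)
The plan is to reduce the statement to the classical coprime Frobenius setting and then apply a residue-class covering argument. First, I would normalize by $\eta$: since $\eta \mid l_i$ for every $i$, set $l_i' := l_i / \eta \in \mathbb{Z}^+$, so that $\gcd(l_1', \ldots, l_k') = 1$. An expression $n\eta = a_1 l_1 + \cdots + a_k l_k$ with $a_i \geq 0$ is equivalent, after dividing by $\eta$, to $n = a_1 l_1' + \cdots + a_k l_k'$. Hence it suffices to show: there is a smallest positive integer $\phi'$ such that every $n \geq \phi'$ lies in the numerical semigroup $S := \{a_1 l_1' + \cdots + a_k l_k' : a_i \in \mathbb{Z}_{\geq 0}\}$, and then take $\phi(l_1, \ldots, l_k) := \phi'$.

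For the case $k = 1$, coprimality forces $l_1' = 1$, and $\phi' = 1$ works trivially. For $k \geq 2$, I would assume without loss of generality that $l_1' = \min_i l_i'$ and show that $S$ meets every residue class modulo $l_1'$. By Bezout's identity applied to $\gcd(l_1', \ldots, l_k') = 1$, there exist integers $b_1, \ldots, b_k$ (possibly negative) with $\sum_i b_i l_i' = 1$, which reduces modulo $l_1'$ to $\sum_{i \geq 2} b_i l_i' \equiv 1 \pmod{l_1'}$. Replacing each negative $b_i$ by $b_i + m_i l_1'$ for a sufficiently large $m_i \in \mathbb{Z}^+$ preserves the congruence and yields nonnegative integers $\tilde b_2, \ldots, \tilde b_k$ with $\sum_{i \geq 2} \tilde b_i l_i' \equiv 1 \pmod{l_1'}$. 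Scaling by $r$, for each $r \in \{0, 1, \ldots, l_1' - 1\}$ the integer $N_r := \sum_{i \geq 2} r \tilde b_i l_i'$ is an element of $S$ with $N_r \equiv r \pmod{l_1'}$.

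Given these representatives, the residue classes are covered upward: if $n \geq N_r$ and $n \equiv r \pmod{l_1'}$, then $n = N_r + \frac{n - N_r}{l_1'} \cdot l_1'$ exhibits $n$ as an element of $S$. Setting $\phi^\dagger := \max\{N_0, N_1, \ldots, N_{l_1'-1}\} + 1$, every integer $n \geq \phi^\dagger$ lies in $S$. The smallest such threshold $\phi'$ exists by the well-ordering of $\mathbb{Z}^+$ applied to the nonempty set of integers $M$ for which $\{n \in \mathbb{Z}^+ : n \geq M\} \subseteq S$. Undoing the rescaling gives the required $\phi(l_1, \ldots, l_k)$.

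The main subtlety — rather than a deep obstacle — is the Bezout-coefficient normalization step: ensuring that the integer combination representing $1 \pmod{l_1'}$ can be chosen with nonnegative coefficients on $l_2', \ldots, l_k'$ so that scaling by $r$ produces a genuine element of $S$ (not merely an integer combination). Everything else is bookkeeping, with finiteness of the residue classes modulo $l_1'$ guaranteeing that only finitely many thresholds $N_r$ must be controlled.
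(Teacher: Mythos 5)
Your proof is correct. Note that the paper does not prove this lemma at all: it is imported verbatim from Brualdi and Ryser's \emph{Combinatorial Matrix Theory} as a known fact about the Frobenius--Schur index, so there is no in-paper argument to compare against. Your argument is the standard one for existence of the Frobenius threshold of a numerical semigroup: reduce to the coprime case by dividing out $\eta$, then cover every residue class modulo $l_1'=\min_i l_i'$ by an explicit nonnegative representative $N_r$ obtained from a Bezout identity whose coefficients on $l_2',\dots,l_k'$ have been shifted by multiples of $l_1'$ to make them nonnegative, and finally climb each class by adding copies of $l_1'$. Each step checks out, including the one subtlety you flag (the shift $b_i\mapsto b_i+m_i l_1'$ preserves the congruence $\sum_{i\ge 2} b_i l_i'\equiv 1 \pmod{l_1'}$ while landing the combination inside the semigroup), and the degenerate cases $k=1$ and $l_1'=1$ are handled. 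The only cosmetic remark is that the final appeal to well-ordering to get the \emph{smallest} threshold is needed precisely because your $\phi^\dagger$ is generally not optimal; you state this correctly.
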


    From the Lemma~\ref{Frobenius_Schur} and the definition of $\eta_{ij}$, there exists a $\phi\left(l(c_1),...,l(c_k)\right)$ such that for any $n\ge \phi\left(l(c_1),...,l(c_k)\right)$, we have that $n\eta_{ij} = a_1 l(c_1)+\cdots+a_k l(c_k)$. Similarly, we can find some $\phi'\left(l(c_1),...,l(c_k)\right)\ge \phi\left(l(c_1),...,l(c_k)\right)$ such that for any integer $n\ge\phi'\left(l(c_1),...,l(c_k)\right)$, we have that $n\eta_{ij} = a_1' l(c_1)+\cdots+a_k' l(c_k)$ where $a_1',\ldots,a_k'$ are positive integers. Therefore, by connecting these cycles with a path $p_{ij}$, we can obtain a walk $w_{ij}$ whose length can be $l(p_{ij})+n\eta_{ij}$ for all $n\ge \phi'\left(l(c_1),...,l(c_k)\right)$.

    As in~\eqref{pi_ijk}, we define a set of integers,
    \begin{equation}\label{eq:pi_ij}
        \pi_{ij} = \left\{l(p_{ij}) \mod \eta_{ij}~|~p_{ij}\in P_{ij}\right\}.
    \end{equation}
    Evidently, if $\pi_{ij}=[0, \eta_{ij}-1]$, i.e., $\pi_{ij}$ is a complete residue system modulo $\eta_{ij}$,  then  for any integer $n\geq \phi'\left(l(c_1),...,l(c_k)\right)\eta_{ij}+\max_{p_{ij}\in P_{ij}}l(p_{ij})$, there exists a walk $w_{ij}$ of length $n$.
    In the special case that $\eta_{ij}=0$, then $\pi_{ij}$ is not well defined. We thereby redefine $\pi_{ij}=\emptyset$ for such a case.

    Based on the above definitions, we have the following result.
    \begin{lem}\label{lem:main01}
        Let $\delta_N^i$ and $\delta_N^j$ be two nodes of the state transition digraph $\mathcal{G}(\bar{\mathcal{M}})$ of the BCN $(\ref{eq:1})$.  Consider the case that  there is only one path $p_{Id(i),Id(j)}^*$ from node $Id(i)$ to node $Id(j)$ in the condensation digraph $\mathcal{G}^*(\bar{\mathcal{M}})$. Then the pair $(\delta_N^i, \delta_N^j)$ belongs to the primitive category, i.e., $\mathcal{C}_{ji}=2$, if and only if $\pi_{ij}=[0,\eta_{ij}-1]$.
    \end{lem}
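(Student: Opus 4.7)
The plan is to prove both directions of the biconditional by exploiting the walk-length decomposition established in the paragraph preceding the lemma: any walk from $\delta_N^i$ to $\delta_N^j$ has length $l(p_{ij}) + a_1 l(c_1) + \cdots + a_k l(c_k)$ for some $p_{ij} \in P_{ij}$ and nonnegative integers $a_s$, where $c_1,\ldots,c_k$ are the cycles contained in the T2SCCs along the unique condensation-path $p^*_{Id(i),Id(j)}$. The uniqueness of this condensation-path is exactly what forces every walk to traverse the same ordered list of SCCs, so the pool of cycles available for insertion is the same across all walks, and $\pi_{ij}$ and $\eta_{ij}$ are well-defined walk-length invariants.

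For sufficiency, assume $\pi_{ij} = [0,\eta_{ij}-1]$. I would invoke Lemma~\ref{Frobenius_Schur}, together with the refinement mentioned in the preceding discussion, to obtain a threshold $\phi'$ such that for every $n \geq \phi'$, the integer $n\eta_{ij}$ admits a representation $\sum_{s=1}^{k} a_s l(c_s)$ with \emph{strictly positive} coefficients $a_s$. The positivity matters because every T2SCC along the condensation-path must actually be entered by the walk, so each cycle must be inserted at least once at a node where the underlying path $p_{ij}$ meets the corresponding T2SCC. Completeness of $\pi_{ij}$ modulo $\eta_{ij}$ then guarantees that, for every sufficiently large integer $m$, there is a path $p_{ij} \in P_{ij}$ with $l(p_{ij}) \equiv m \pmod{\eta_{ij}}$ and $n := (m-l(p_{ij}))/\eta_{ij} \geq \phi'$, yielding an actual walk of length exactly $m$. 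Hence $\rho(i,j)$ contains all sufficiently large integers, so $|\rho(i,j)| = \infty$ and $|\sigma(i,j)| < \infty$, i.e.\ $\mathcal{C}_{ji} = 2$.

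For necessity, I would argue by contrapositive and split into two cases. If $\eta_{ij} = 0$, then no T2SCC lies along the condensation-path, no cycles are available for insertion, every walk from $\delta_N^i$ to $\delta_N^j$ is a simple path, and $|\rho(i,j)| < \infty$, so $\mathcal{C}_{ji} \neq 2$. Otherwise $\eta_{ij} \neq 0$ but $\pi_{ij} \subsetneq [0,\eta_{ij}-1]$; I would fix an omitted residue $r \in [0,\eta_{ij}-1] \setminus \pi_{ij}$. For every cycle $c_s$ in a T2SCC with index of imprimitivity $\eta_s$, its length $l(c_s)$ is a multiple of $\eta_s$, and $\eta_{ij} \mid \eta_s$, so $\eta_{ij} \mid l(c_s)$. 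Consequently $l(w_{ij}) \equiv l(p_{ij}) \pmod{\eta_{ij}}$ for every walk, which means $l(w_{ij}) \bmod \eta_{ij} \in \pi_{ij}$. The infinite arithmetic progression of positive integers congruent to $r$ modulo $\eta_{ij}$ is therefore disjoint from $\rho(i,j)$, forcing $|\sigma(i,j)| = \infty$ and $\mathcal{C}_{ji} \neq 2$.

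The main obstacle is the sufficiency direction, specifically ensuring that the positive Frobenius--Schur combination $\sum a_s l(c_s)$ can genuinely be stitched into a single contiguous walk along $p_{ij}$. Each cycle $c_s$ must be attached at a node that $p_{ij}$ visits inside the corresponding T2SCC, which requires appealing to the strong connectivity of each T2SCC to translate $c_s$ to a suitable anchor via a short closed excursion from that anchor. This step is implicit in the walk-decomposition sentence preceding Lemma~\ref{Frobenius_Schur}, but needs explicit care (and in particular justifies why requiring $a_s > 0$ for \emph{every} $s$, rather than merely $a_s \geq 0$, is the appropriate hypothesis to pull from Lemma~\ref{Frobenius_Schur}).
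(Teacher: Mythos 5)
Your proof is correct and follows essentially the same route as the paper's: sufficiency via the Frobenius--Schur index applied to the cycle lengths of the T2SCCs along the unique condensation path (yielding walks of every sufficiently large length once $\pi_{ij}$ is a complete residue system), and necessity via an omitted residue class modulo $\eta_{ij}$. Two small remarks: your necessity argument is actually more complete than the paper's, which merely asserts that walks of length $n\eta_{ij}+k$ cannot be constructed, whereas you supply the justification ($\eta_{ij}$ divides every relevant cycle length, so every walk length is congruent to some path length mod $\eta_{ij}$); on the other hand, your stated reason for insisting on strictly positive coefficients $a_s$ is off --- a walk must enter each T2SCC on the condensation path but need not traverse every cycle of it --- though this over-requirement is harmless, and the cycle-stitching subtlety you flag in the sufficiency direction is real but equally implicit in the paper's own argument.
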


\begin{proof}
	({\rm Sufficiency}). When $\pi_{ij}=[0,\eta_{ij}-1]$, by the arguments before Lemma~\ref{lem:main01}, there exists an integer
	\begin{align}\label{nprime}
	n':=\phi'\left(l(c_1),...,l(c_k)\right)\eta_{ij}+\max_{p_{ij}\in P_{ij}}l(p_{ij})>0
	\end{align}
	such that one can construct a walk $w_{ij}$ with the length $n$, for any integer $n\geq n'$. In other words,  $|\rho(i,j)|=\infty$ and $|\sigma(i,j)|<\infty$. This implies $\mathcal{C}_{ji}=2$.
	
	({\rm Necessity}). When $\mathcal{C}_{ji}=2$, we have that $|\rho(i,j)|=\infty$ and $|\sigma(i,j)|<\infty$. Suppose that to the contrary $\pi_{ij}\ne[0,\eta_{ij}-1]$. Then there exists an integer $k\in [0,\eta_{ij}-1]$ such that $k \notin \pi_{ij}$, which implies that a walk of length $\eta_{ij}n+k$, $\forall n \ge n'$, cannot be constructed. This contradicts with $|\sigma(i,j)|<\infty$.
\end{proof}

Note that Lemma~\ref{lem:main01} essentially proves the third bulletin of Theorem~\ref{thm:main02} for the case when $|P^*_{Id(i), Id(j)}|=1$. We next consider the general case where $|P^*_{Id(i), Id(j)}|=K$, with $K$ being any positive integer, and prove the last two bulletins of Theorem~\ref{thm:main02}.

\begin{proof}[Proof of Theorem~\ref{thm:main02}, part~II]
	\begin{enumerate}
		\item[(3)]
		(Sufficiency). Recall the definitions before Theorem~\ref{thm:main02}. Each $\eta_{ij}^k$ is the greatest common divisor of the index of primitivities of the T2SCCs along path $p_{Id(i),Id(j)}^{*k}$. Similar to the arguments for the case $|P^*_{Id(i), Id(j)}|=1$, it can be shown that if $s\in \pi_{ij}^k$, we can construct a walk $w_{ij}$ of length $n_k\eta_{ij}^k + s$ for any $n_k\ge n_k'$ for some positive integer $n_k'$. Here, we can pick $n_k'$ as in~\eqref{nprime}. We perform the same implementation for each path $p_{Id(i),Id(j)}^{*k}$. Then, with the definition of $\bar{\eta}_{ij}$, for some $n_k''$, we can rewrite the set
		$\{n_k\eta_{ij}^k + s\mid n_k\ge n_k'\}$ as $\{n_k\bar{\eta}_{ij} + s, n_k\bar{\eta}_{ij} + \eta_{ij}^k + s, ...,  n_k\bar{\eta}_{ij} + \eta_{ij}^k \frac{\bar{\eta}_{ij}}{\eta_{ij}^k} + s\mid n_k\ge n_k''\}$. Note that this can be done for each $k\in[1, K]$. Therefore, for each $s^*\in \bar{\pi}_{ij}$, we have that $s^*\in\Tilde{\pi}_{ij}^k$ for some $k$, and a walk of length $n_k\bar{\eta}_{ij}+s^*$, $n_k\ge n_k''$, can be constructed. If $\bar{\pi}_{ij} = [0,\bar{\eta}_{ij}-1]$, then there exists some $n^*=\max_{k\in [1,K]} n_k''$ such that for any $n\ge n^*$, we can construct a walk of length $n$. This implies that $|\rho(i,j)|=\infty$ and $|\sigma(i,j)|<\infty$, or equivalently, $\mathcal{C}_{ji}=2$.
		
		\noindent
		(Necessity). Suppose that to the contrary $\bar{\pi}_{ij} \ne [0,\bar{\eta}_{ij}-1]$, then there exists some $s\in [0,\bar{\eta}_{ij}-1] $ such that $s\notin \Tilde{\pi}_{ij}^k$ for any $k \in [1,K]$, which implies that we cannot construct a walk of length $n\bar{\eta}_{ij}+s,\forall n\in\mathbb{Z}^+$. This implies that $|\sigma(i,j)|=\infty$, which is a contradiction.
		\item[(4)] Since we have shown~(1), (2), (3), the result of~(4) follows directly.
	\end{enumerate}
\end{proof}

\subsection{Connection of categorization results to graph structure}\label{last}
In this subsection, we provide a further result on the categorization of state pairs. In particular, we show the following fact which relates the categorization to the structure of the state transition digraph.

\begin{theorem}\label{lem:Condensationmatrix}
    Let $(\delta_N^i,\delta_N^j)$ be a pair of states of BCN~\eqref{eq:1}. Suppose that $\delta_N^i\in \mathcal{X}_\alpha$ and $\delta_N^j\in\mathcal{X}_\beta$, where $\mathcal{X}_{\alpha}$ and $\mathcal{X}_{\beta}$ are two SCCs of the state  transition digraph. Then, for any states $\delta_N^{i'}\in\mathcal{X}_\alpha$ and $\delta_N^{j'}\in\mathcal{X}_{\beta}$, we have that $\mathcal{C}_{j'i'}=\mathcal{C}_{ji}$.
        %Let $\mathcal{X}_{\alpha}$ and $\mathcal{X}_{\beta}$ be two SCCs of the state  transition  digraph of BCN~\eqref{eq:1}.
        %Then, for all pair $(\delta_N^i,\delta_N^j)$ with $\delta_N^i \in \mathcal{X}_{\alpha}$ and $\delta_N^j \in \mathcal{X}_{\beta}$,  we have that all $\mathcal{C}_{ji}$ preserve the unified value, which is in the integer set $[0,3]$.
\end{theorem}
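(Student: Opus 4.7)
The plan is to reduce Theorem~\ref{lem:Condensationmatrix} to the statement that $\bar{\eta}_{ij}=\bar{\eta}_{i'j'}$ and that $\bar{\pi}_{i'j'}$ is obtained from $\bar{\pi}_{ij}$ by a fixed cyclic shift modulo $\bar{\eta}_{ij}$; by Theorem~\ref{thm:main02} this immediately forces $\mathcal{C}_{ji}=\mathcal{C}_{j'i'}$, since a shift modulo $\bar{\eta}_{ij}$ preserves both emptiness and the property of exhausting $[0,\bar{\eta}_{ij}-1]$.

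First I would dispose of the easy categories. The condensation digraph $\mathcal{G}^*(\bar{\mathcal{M}})$, its set of paths $P^*_{\alpha,\beta}$, the T2SCCs lying on each such path, and their indices of imprimitivity depend only on $\alpha=Id(i)=Id(i')$ and $\beta=Id(j)=Id(j')$, not on the specific representatives. Hence $\bar{\eta}_{ij}=\bar{\eta}_{i'j'}$ and $\eta_{ij}^k=\eta_{i'j'}^k$ for every $k$. Reachability in the state transition digraph is equivalent to the existence of a path from $\alpha$ to $\beta$ in the condensation, so $\mathcal{F}_{ji}=\mathcal{F}_{j'i'}$ and category~0 is preserved. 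Moreover $\bar{\pi}_{ij}=\emptyset$ exactly when no T2SCC appears on any $p^{*k}_{\alpha,\beta}$, which is again an invariant of $(\alpha,\beta)$, so category~1 is preserved as well.

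The core of the argument handles categories~2 and~3. The key observation will be that since $\alpha$ lies on \emph{every} path $p^{*k}_{\alpha,\beta}$, if $\mathcal{X}_\alpha$ is a T2SCC then each $\eta_{ij}^k$ divides its index of imprimitivity $\eta_\alpha$, and therefore so does their least common multiple $\bar{\eta}_{ij}$; the analogous statement holds for $\beta$. Fix any walk $w_1$ from $\delta_N^{i'}$ to $\delta_N^i$ inside $\mathcal{X}_\alpha$ (take $w_1$ trivial if $\mathcal{X}_\alpha$ is a T1SCC, in which case $i=i'$), and any walk $w_2$ from $\delta_N^j$ to $\delta_N^{j'}$ inside $\mathcal{X}_\beta$. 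By Lemma~\ref{imprimitivity_property} the residues $c_1:=l(w_1)\bmod \bar{\eta}_{ij}$ and $c_2:=l(w_2)\bmod \bar{\eta}_{ij}$ are intrinsic, independent of the walks chosen. Invoking the walk-length characterization $\bar{\pi}_{ij}=\{l(w)\bmod\bar{\eta}_{ij}\mid w\text{ a walk from }\delta_N^i\text{ to }\delta_N^j\}$ that is implicit in the proof of Theorem~\ref{thm:main02} (cycles in T2SCCs along the traversed $p^{*k}$ contribute multiples of $\eta_{ij}^k$), the concatenation $w_1\cdot w\cdot w_2$ yields $\bar{\pi}_{ij}+c_1+c_2\subseteq \bar{\pi}_{i'j'}\pmod{\bar{\eta}_{ij}}$. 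Applying the same construction with walks $w_1':\delta_N^i\to\delta_N^{i'}$ and $w_2':\delta_N^{j'}\to\delta_N^j$ yields the reverse inclusion, using that $w_1w_1'$ and $w_2'w_2$ are closed walks in their respective T2SCCs and thus have lengths divisible by $\eta_\alpha$ and $\eta_\beta$, so $c_1+l(w_1')\equiv c_2+l(w_2')\equiv 0\pmod{\bar{\eta}_{ij}}$.

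I expect the main obstacle to be Step~3, namely the well-definedness of the shift constants $c_1,c_2$ modulo $\bar{\eta}_{ij}$; this rests squarely on the divisibilities $\bar{\eta}_{ij}\mid \eta_\alpha$ and $\bar{\eta}_{ij}\mid \eta_\beta$, which in turn require the structural observation that $\alpha$ (respectively $\beta$) is a T2SCC lying on \emph{every} path $p^{*k}_{\alpha,\beta}$, not merely on some. Once this divisibility is established, the remaining concatenation-of-walks arguments are routine and the conclusion follows by combining the shift relation with Theorem~\ref{thm:main02}.
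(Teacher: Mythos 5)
Your proof is correct and follows essentially the same route as the paper's: both reduce the claim to showing that $\bar{\eta}_{i'j'}=\bar{\eta}_{ij}$ and that $\bar{\pi}_{i'j'}$ is a cyclic shift of $\bar{\pi}_{ij}$ modulo $\bar{\eta}_{ij}$ (obtained by concatenating walks inside $\mathcal{X}_\alpha$ and $\mathcal{X}_\beta$ and invoking Lemma~\ref{imprimitivity_property}), after which Theorem~\ref{thm:main02} gives $\mathcal{C}_{j'i'}=\mathcal{C}_{ji}$. If anything, you supply a justification the paper leaves implicit, namely the divisibilities $\bar{\eta}_{ij}\mid\eta_\alpha$ and $\bar{\eta}_{ij}\mid\eta_\beta$ (because $\mathcal{X}_\alpha$ and $\mathcal{X}_\beta$ lie on every condensation path from $\alpha$ to $\beta$), which is exactly what makes the shift constants well defined modulo $\bar{\eta}_{ij}$.
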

    We now prove the above theorem. To proceed, we first recall some notations.
    Let $\delta_N^i$ and $\delta_N^j$ be two nodes of the state transition digraph $\mathcal{G}(\bar{\mathcal{M}})$ of the BCN $(\ref{eq:1})$. Denote by $\alpha := Id(i)$ and $\beta = Id(j)$. Let $\eta_{\alpha}$ (resp. $\eta_{\beta}$) be the index of imprimitivity of the SCC $\mathcal{X}_{\alpha}$ (resp. $\mathcal{X}_{\beta}$). In the special case that $\mathcal{X}_{\alpha}$ (resp. $\mathcal{X}_{\beta}$) is a T1SCC, we redefine $\eta_{\alpha}:=0$ (resp.  $\eta_{\beta}:=0$).

    With the above notations, the proof of Theorem~\ref{lem:Condensationmatrix} is shown as follows.

\begin{proof}[Proof of Theorem~\ref{lem:Condensationmatrix}]
	It should be clear that if there is no path from any node in $\mathcal{X}_{\alpha}$  to any node in $\mathcal{X}_{\beta}$, then, we have that $\mathcal{C}_{ji}=0$, $\forall \delta_N^i \in \mathcal{X}_{\alpha}, \forall \delta_N^j \in \mathcal{X}_{\beta}$.  We now restrict our discussion to the situation that there exists a path from some node in $\mathcal{X}_{\alpha}$ to some node in $\mathcal{X}_{\beta}$.
	
	First, consider the case that $\alpha=\beta$. If there is only one node in $\mathcal{X}_{\alpha}$ (i.e., T1SCC or T2SCC), then the theorem trivially holds. If $\mathcal{X}_{\alpha}$ is a T2SCC with at least two nodes, one can prove the theorem as follows. (1) If $\eta_{\alpha}=1$, then we have that $\mathcal{C}_{ji}=2, \forall \delta_N^i, \delta_N^j \in \mathcal{X}_{\alpha}$. (2) If $\eta_{\alpha}>1$, then by the Lemma \ref{imprimitivity_property}, for any pair $\delta_N^i, \delta_N^j \in \mathcal{X}_{\alpha}$, there exists an integer $k\in[0, \eta_{\alpha}]$ such that $\pi_{ij}=\{k\}$. In other words, $\pi_{ij}\neq \emptyset$ and $\pi_{ij}\neq [0, \eta_{\alpha}]$. By Theorem \ref{thm:main02}, this implies that  $\mathcal{C}_{ji}=3, \forall \delta_N^i, \delta_N^j \in \mathcal{X}_{\alpha}$.
	
	Next, consider the case that $\alpha\neq \beta$. For any pair ($\delta_N^i,\delta_N^j$) with $\delta_N^i  \in \mathcal{X}_{\alpha}$ and $\delta_N^j \in \mathcal{X}_{\beta}$, we have that $\bar{\eta}_{ij}=\eta_{\alpha\beta}^*$ for some $\eta_{\alpha\beta}^*$.
	\begin{enumerate}
		\item If $\eta_{\alpha\beta}^*=1$, then,  $\bar{\pi}_{ij}=\{0\}=[0, \eta_{\alpha\beta}^*-1]$. This implies that $\mathcal{C}_{ji}=2, \forall \delta_N^i\in \mathcal{X}_{\alpha}, \delta_N^j \in \mathcal{X}_\beta$.
		\item If $\eta_{\alpha\beta}^*>1$, then for any $\delta_N^{j'}\in \mathcal{X}_\beta$, from Lemma~\ref{imprimitivity_property}, one can conclude that there exists some $s_1$ such that
		$$
		\bar{\pi}_{ij} + s_1 := \{s_1 + y \mod  \bar{\eta}_{ij}~|~ y \in \bar{\pi}_{ij}\} \equiv \bar{\pi}_{ij'}.
		$$
		For any $\delta_N^{i'}\in \mathcal{X}_\alpha$, we can also conclude that there exists some $s_2$ such that
		$$
		\bar{\pi}_{ij} + s_2 := \{s_2 + y \mod  \bar{\eta}_{ij}~|~ y \in \bar{\pi}_{ij}\} \equiv \bar{\pi}_{i'j}.
		$$
		Then we have that
		$$
		\bar{\pi}_{ij} + s: = \{s + y \mod  \bar{\eta}_{ij}~|~ y \in \pi_{ij}\} \equiv \bar{\pi}_{i'j'},
		$$
		where $s:=s_1-s_2$. This implies that $|\bar{\pi}_{i'j'}|=|\bar{\pi}_{ij}|$, $\forall \delta_N^{i'} \in \mathcal{X}_{\alpha}, \forall \delta_N^{j'} \in \mathcal{X}_{\beta}$. Then by Theorem~\ref{thm:main02}, we conclude that $\mathcal{C}_{j'i'}=\mathcal{C}_{ji}$.
	\end{enumerate}
\end{proof}

    Based on the Theorem~\ref{lem:Condensationmatrix}, we can define an $S\times S$ matrix $\mathscr{C}=(\mathscr{C}_{\beta\alpha}),~\mathscr{C}_{\beta\alpha}\in [0,3]$, where  $\mathscr{C}_{\beta\alpha}:=\mathcal{C}_{ji}$ with  $\delta_N^i \in \mathcal{X}_{\alpha}$ and $\delta_N^j \in \mathcal{X}_{\beta}$. We call $\mathscr{C}$ the \textit{condensation controllability categorization matrix} of BCN~\eqref{eq:1}.  Then Theorem~\ref{lem:Condensationmatrix} has the following equivalent expression.

    \noindent
{\bf Theorem~\ref{lem:Condensationmatrix}} (An alternative version).
\emph{
        Let $\mathcal{X}_{\alpha}$ and $\mathcal{X}_{\beta}$ be two SCCs of the state  transition  digraph of the BCN $(\ref{eq:1})$.
        Then, the pair $(\delta_N^i,\delta_N^j)$ with $\delta_N^i \in \mathcal{X}_{\alpha}$ and $\delta_N^j \in \mathcal{X}_{\beta}$ belongs to the
        \begin{enumerate}
            \item unreachable category, if and only if,  $\mathscr{C}_{\alpha,\beta}=0$;
            \item transient category, if and only if,  $\mathscr{C}_{\alpha,\beta}=1$;
            \item primitive category, if and only if,  $\mathscr{C}_{\alpha,\beta}=2$;
            \item imprimitive category, if and only if,  $\mathscr{C}_{\alpha,\beta}=3$.
        \end{enumerate}
}

   \begin{rmk}\label{rem:Condensationmatrix}
        We note that our definition of condensation controllability categorization matrix $\mathscr{C}$ is a generalization of the so-called {\rm reduced controllability matrix}  $\mathscr{B}(\mathscr{C})$ in~\cite{zhu2019further}. Notably, the dimension of $\mathscr{C}$ may be much smaller than the one of $\mathcal{C}$, if the number of nodes of the condensation digraph of the  BCN~$(\ref{eq:1})$ is much smaller than the number of states of the  BCN~$(\ref{eq:1})$, i.e.,  $S\ll N$. In other words, to save the controllability information of BCNs, $\mathscr{C}$ is much better and more economical than $\mathcal{C}$.
    \end{rmk}

\section{Example}\label{section4}
In this section, we provide an example BCN as in~\cite{zhu2019further} to illustrate our results.  In particular, the algebraic form of the BCN is
\begin{equation}\label{ex1}
            x(t+1)=\delta_8[2,5,3,5,6,4,8,7,4,5,4,5,6,7,8,7]u(t)x(t),
    %x(t+1)=[\delta_8^2~\delta_8^5~\delta_8^3~\delta_8^5~\delta_8^6~\delta_8^4~\delta_8^8~\delta_8^7~
            %\delta_8^4~\delta_8^5~\delta_8^4~\delta_8^5~\delta_8^6~\delta_8^7~\delta_8^8~\delta_8^7]u(t)x(t),
\end{equation}
where $x(t)\in\Delta_8$, $u(t)\in\Delta_2$. The state transition digraph of the BCN~$(\ref{ex1})$ and its condensation digraph  are shown in the Fig.~\ref{fg:ex1A}-\ref{fg:ex1B}. In particular, the state transition digraph has $5$ SCCs, $\mathcal{X}_1=\{\delta_8^1\}$, $\mathcal{X}_2=\{\delta_8^2\}$, $\mathcal{X}_3=\{\delta_8^3\}$, $\mathcal{X}_4=\{\delta_8^4,\delta_8^5,\delta_8^6\}$, $\mathcal{X}_5=\{\delta_8^7,\delta_8^8\}$, and $Id(i)=i$, $i \in [1,3]$,  $Id(j)=4$, $j \in [4,6]$, $Id(k)=5$, $k \in [7,8]$. Notably, $\mathcal{X}_1$ and $\mathcal{X}_2$ are T1SCCs, whereas $\mathcal{X}_3$, $\mathcal{X}_4$ and $\mathcal{X}_5$ are T2SCCs with the indexes  of imprimitivity equal to 1, 3 and 2, respectively.

Here, we consider the pair ($\delta_8^1, \delta_8^4$). From the Fig.~\ref{fg:ex1A}, the path set $P_{14}$ from $\delta_8^1$ to $\delta_8^4$ has only two paths with the lengths 1 and 4, respectively.  In the the condensation digraph, i..e, Fig.~\ref{fg:ex1B}, there are also two paths from the node $Id(1)=1$ to node $Id(4)=4$. So the path set $P_{14}$ can be partitioned into two distinct sets, $P_{14}^1=\{p_{14}^1\}$ and $P_{14}^2=\{p_{14}^2\}$, where $p_{14}^1: \delta_8^1 \rightarrow \delta_8^4$ and $p_{14}^2: \delta_8^1 \rightarrow \delta_8^2 \rightarrow \delta_8^5 \rightarrow \delta_8^6 \rightarrow \delta_8^4$. Since there is only one cycle with the length 3 in the $\mathcal{X}_4$, we have $\eta_{14}^1=3$ and $\eta_{14}^2=3$. According to~\eqref{pi_ijk}, we have that $\Tilde{\pi}_{14}^1 = \pi_{14}^1 = \{1\}$ and $\Tilde{\pi}_{14}^2 = \pi_{14}^2 = \{2\}$. Hence,  $\bar{\eta}_{14}=3$ and $\bar{\pi}_{14} =\Tilde{\pi}_{14}^1 \sqcup \Tilde{\pi}_{14}^2=\{1,2\}\neq [0, 2]$. This, together with the Theorem~\ref{thm:main02}, implies that $\mathcal{C}_{41}=3$, i.e., the pair ($\delta_8^1, \delta_8^4$) belongs to the imprimitive category.

Akin to the analysis above,  the controllability categories of the other pairs can be obtained. Indeed, one can obtain the controllability categorization matrix. Then, based on the matrix $\mathcal{C}$, the condensation controllability categorization matrix can be induced. Both matrices are presented as follows,
\begin{equation*}
    {
    \mathcal{C}=\left(
            \begin{array}{c|c|c|ccc|cc}
                0 &0 &0 &0 &0 & 0& 0&0\\
                \hline
                1 &0 &0 &0 &0 & 0& 0&0\\
                \hline
                0 &0 &2 &0 &0 & 0& 0&0\\
                \hline
                3 &3 &1 &3 &3 & 3& 0&0\\
                3 &3 &1 &3 &3 & 3& 0&0\\
                3 &3 &1 &3 &3 & 3& 0&0\\
                \hline
                2 &2 &1 &2 &2 & 2& 3&3\\
                2 &2 &1 &2 &2 & 2& 3&3\\
            \end{array}
    \right),~\\
    \mathscr{C}=\left(
            \begin{aligned}
                0~ 0~ &0~ 0~ 0\\
                1~ 0~ &0~ 0~ 0\\
                0~ 0~ &2~ 0~ 0\\
                3~ 3~ &1~ 3~ 0\\
                2~ 2~ &1~ 2~ 3\\
            \end{aligned}
    \right).\\
    }
\end{equation*}
%Then, based on the the matrix $\bar{\mathcal{C}}:=\mathcal{C}^{\top}$, the condensation controllability category matrix is presented as follows,
%\begin{equation*}
%    \mathscr{C}=\left(
%            \begin{array}{ccccc}
%                0 &1 &0 &3 &2\\
%                0 &0 &0 &3 &2\\
%                0 &0 &2 &1 &1\\
%                0 &0 &0 &3 &2\\
%                0 &0 &0 &0 &3\\
%            \end{array}
%    \right).\\
%\end{equation*}

\begin{figure}[thb]%\label{fg1}
        \centering
        \begin{center}
            \includegraphics[width=8.6cm]{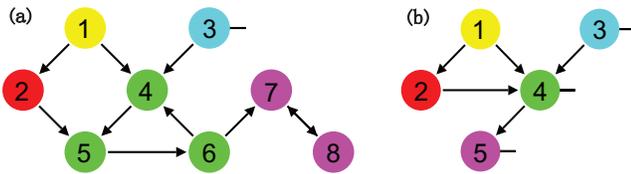}
            \subfigure{\label{fg:ex1A}}\subfigure{\label{fg:ex1B}}
        \end{center}
        \caption{This figure, originally from \cite{zhu2019further}, shows  (a) the state transition digraph  and (b) the condensation digraph of the BCN~\eqref{ex1}, respectively. For simplicity, in (a), the number $i$ in each circle denotes the state $\delta_8^i$, and in (b), the number $i$ in each circle represents the $i$th SCC $\mathcal{X}_i$.}\label{fg1}
        %\normalsize
\end{figure}

\section{Conclusions}\label{sec5}

In this paper, we have established a detailed analysis on the~$k$ fixed-time controllability of all state pairs of a BCN. The definition of the controllability categorozation matrix is first proposed, extending the conventional controllability matrix. By routinely using the algebraic form of BCNs and the algebraic digraph theory,  we have constructed the controllability categorization matrix. Then, a condensation controllability categorization matrix is also induced.  %Finally, we have provided an example to illustrate the usefulness of the analytical controllability framework that we have developed in the paper.
Overall, leveraging this framework may enable the development in the control-theoretic analysis of BCNs in the future. %such as, 1) design proper control law to stabilize the BCNs or synchronize a coupled BCNs, 2) extend the controllability category concept to some special kinds of BNs, such as CBNs, and 3) application to the real biological systems that are governed by the BCNs.
%\\
%\\
%\textbf{References}

% if have a single appendix:
%\appendix[Proof of the Zonklar Equations]
% or
%\appendix  % for no appendix heading
% do not use \section anymore after \appendix,  only \section*
% is possibly needed

% use appendices with more than one appendix
% then use \section to start each appendix
% you must declare a \section before using any
% \subsection or using \label (\appendices by itself
% starts a section numbered zero.)
%

% you can choose not to have a title for an appendix
% if you want by leaving the argument blank

% use section* for acknowledgement
%\section*{Acknowledgment}
%A toolbox for all the related computations is available
%at http://lsc.amss.ac.cn/dcheng/.

% Can use something like this to put references on a page
% by themselves when using endfloat and the captionsoff option.
\ifCLASSOPTIONcaptionsoff
  \newpage
\fi
% trigger a \newpage just before the given reference
% number - used to balance the columns on the last page
% adjust value as needed - may need to be readjusted if
% the document is modified later
%\IEEEtriggeratref{8}
% The "triggered" command can be changed if desired:
%\IEEEtriggercmd{\enlargethispage{-5in}}

% references section

% can use a bibliography generated by BibTeX as a.bbl file
% BibTeX documentation can be easily obtained at:
% http://www.ctan.org/tex-archive/biblio/bibtex/contrib/doc/
% The IEEEtran BibTeX style support page is at:
% http://www.michaelshell.org/tex/ieeetran/bibtex/
%\bibliographystyle{IEEEtran}
% argument is your BibTeX string definitions and bibliography database(s)
%\bibliography{IEEEabrv, ../bib/paper}
%
% <OR> manually copy in the resultant.bbl file
% set second argument of \begin to the number of references
% (used to reserve space for the reference number labels box)
%\bibliographystyle{IEEEtran}
%\bibliography{bare_jrnl}

% Generated by IEEEtran.bst,  version: 1.13 (2008/09/30)

%\bibliographystyle{plain}
\bibliographystyle{ieeetr}
%\bibliographystyle{unsrt}
%\begin{thebibliography}{10}
\bibliography{ZQX}

% that's all folks
\end{document}